\newtheorem{theorem}{Theorem}[section]
\newtheorem{lemma}[theorem]{Lemma}
\begin{document}

\title{Pauli-based model of quantum computation with higher-dimensional systems}

\author{Filipa C. R. Peres}
\email{filipa.peres@inl.int}
\affiliation{International Iberian Nanotechnology Laboratory (INL), Av. Mestre Jos\'{e} Veiga, 4715-330 Braga, Portugal.}
\affiliation{Departamento de F\'{i}sica e Astronomia, Faculdade de Ci\^{e}ncias, Universidade do Porto, rua do Campo Alegre s/n, 4169–007 Porto, Portugal.}

\date{\today}

\begin{abstract}
Pauli-based computation (PBC) is a universal model for quantum computation with qubits where the input state is a magic (resource) state and the computation is driven by a sequence of adaptively chosen and compatible multiqubit Pauli measurements. Here we generalize PBC for odd-prime-dimensional systems and demonstrate its universality. Additionally, we discuss how any qudit-based PBC can be implemented on actual, circuit-based quantum hardware. Our results show that we can translate a PBC on $n$ $p$-dimensional qudits to adaptive circuits on $n+1$ qudits with $O\left( pn^2/2 \right)$ \textsc{sum} gates and depth. Alternatively, we can carry out the same computation with $O\left( pn/2\right)$ depth at the expense of an increased circuit width. Finally, we show that the sampling complexity associated  with simulating a number $k$ of virtual qudits is related to the robustness of magic of the input states. Computation of this magic monotone for qutrit and ququint states leads to sampling complexity upper bounds of, respectively, $O\left( 3^{ 1.0848 k} \epsilon^{-2}\right)$ and $O\left( 5^{ 1.4022 k} \epsilon^{-2}\right)$, for a desired precision $\epsilon$. We further establish lower bounds to this sampling complexity for qubits, qutrits, and ququints: $\Omega \left( 2^{0.5431 k} \epsilon^{-2} \right)$, $\Omega \left( 3^{0.7236 k} \epsilon^{-2} \right)$, and $\Omega \left( 5^{0.8544 k} \epsilon^{-2} \right)$, respectively.

\begin{description}
\item[Key words]
Pauli-based computation; magic-state injection; hybrid computation; qudits.
\end{description}
\end{abstract}

\maketitle

\section{Introduction}\label{sec:Introduction}

The quantum circuit model is probably the most pervasive and well-known paradigm for universal quantum computation~\cite{NielsenChuang}. The fact that it is a fairly intuitive model inspired by a classical analog has likely contributed to its substantial success, as it provides a transparent framework for theoretical studies of quantum computation and quantum complexity compared to the classical case. Nevertheless, other models exist that have no classical counterpart, for instance, the one-way model (1WM) of measurement-based quantum computation (MBQC)~\cite{RaussBrie2001}, fusion-based quantum computation~\cite{FBQC2023}, and Pauli-based  computation (PBC)~\cite{BSS2016}.

Research into these models (and quantum computing and technologies in general) has been largely focused on two-level systems, called \textit{qubits}. However, thinking about higher-dimensional systems, dubbed \textit{qudits}, is not unreasonable, since we usually enforce a two-level structure on quantum systems that are naturally multileveled (e.g., atomic energy levels). Additionally, fewer qudits are needed to encode a Hilbert space of the same size, certain quantum algorithms showcase improved implementations when using qudits~\cite{RRGil2007, Gedik+2015}, and magic-state distillation seems more efficient in higher-dimensional systems~\cite{CampbellAB2012}. Research on qudits has also seen interesting results in the context of quantum foundations~\cite{Kasz+2002, CollinsGisin+2002} and quantum cryptography~\cite{BechPasq2000, Machiavello2002}. Moreover, successful experimental realizations with qudits have emerged in recent years~\cite{Lanyon+2008, Lanyon+2009, Bianchetti+2010, Senko+2015, LuoZhong+2019, HuZhang+2020, BlokRam+2021, ChiHuang+2022, Ringbauer2022} as well as work on benchmarking such quantum hardware~\cite{Morvan+2021}. Thus, it is conceivable that the future of quantum technology relies on qudits rather than qubits. 

Focusing on quantum computing, the quantum circuit model has been proven to be universal for fault-tolerant computation with qudits~\cite{Gottesman1999}, and afterward, the same was done for the 1WM~\cite{ZZXS2003, Clark2006}. In this paper, we focus on generalizing the Pauli-based model of quantum computation beyond the qubit case.

In addition to its intrinsic theoretical relevance and the aforementioned motivations underpinning the use of higher-dimensional systems, we hope that this work calls more attention to some of the peculiarities of PBC and arouses the interest of the quantum computing community. Indeed, unlike other universal models, PBC remains a rather understudied framework. Here, we extend its realm of applicability, adding to very recent (and still scarce) research employing PBC in a myriad of contexts ranging from fault-tolerant implementations~\cite{Litinski2018latticesurgery, Chamberland+2022catcodes} and classical simulation of quantum computation~\cite{Zurel2020, Zurel2023}, to circuit compilation and hybrid computation~\cite{PeresGa2022}.

A detailed discussion of PBC is deferred to Sec.~\ref{subsec: PBC with qubits}. For now, it suffices to know that, as originally formulated in Ref.~\cite{BSS2016}, this model takes as input a separable magic state of $n$ qubits, $\left|T \right>^{\otimes n}$, with $\left|T \right> = (\left|0 \right> + e^{i\pi/4}\left|1 \right>)/\sqrt{2},$ and is driven by a sequence of (at most) $n$ independent and pairwise commuting $n$-qubit Pauli measurements. One of the main drawbacks of PBC is probably the nonlocal nature of these measurements. For this reason, besides generalizing the model for odd-prime-dimensional qudits and demonstrating its universality, we also discuss how a sequence of multiqudit Pauli observables can be measured in practice in circuit-based quantum hardware. Specifically, we propose two different methods to do so; the first requires a single auxiliary qudit and leads to a quantum circuit with depth quadratic in $n$, whilst the second uses $n$ auxiliary qudits to achieve a linear depth.

Due to the larger Hilbert space achieved with higher-dimensional systems, we expect PBC to require fewer qudits as $p$ increases. As a consequence, the total number of measurements and also, potentially, the number of qudits involved in each measurement are reduced; this should lead to simpler and shallower implementations and help to relax the demands on the coherence time of the qudits.

One of the most promising features of PBC is the possibility to remove a certain (desired) number of qubits (dubbed \textit{virtual qubits} in Ref.~\cite{BSS2016}) from the computation. It was shown that the cost of offloading $k$ virtual qubits to a classical machine is a sampling complexity that scales exponentially with $k$~\cite{BSS2016, PeresGa2022}. We demonstrate that this hybrid quantum-classical scenario can be generalized for odd-prime-dimensional qudits and derive upper and lower bounds for the corresponding sampling complexities for qubits, qutrits ($p=3$), and ququints ($p=5$). Remarkably, the results suggest that the cost of extending the quantum memory by $k$ virtual qudits increases by an amount larger than what can be explained by the increase of the qudit's dimensionality alone.

This paper is structured as follows. In Sec.~\ref{sec: Background}, we start by presenting all the background required for the understanding of our work. Sec.~\ref{sec: PBC for qudits} contains our results; in Sec.~\ref{subsec: Proof of universality}, we show that PBC is universal for quantum computation with qudits, in Sec.~\ref{subsec: Practical implementation}, we describe how measurements of $n$-qudit Pauli operators can be implemented in practice, and, in Sec.~\ref{subsec: Hybrid computation}, we assess the cost of performing hybrid quantum-classical computation within this framework. We conclude the paper with some final remarks in Sec.~\ref{sec:Conclusions}.

\section{Background}\label{sec: Background}

To frame our results in the context of previous findings, we start this section with a state-of-the-art review of PBC with qubits. Then, we present generalized versions of the Pauli and Clifford groups for qudits and explain how to perform universal quantum computation with such higher-dimensional systems by implementing non-Clifford gates via magic-state injection. We focus on the case of $p$-level qudits, where $p$ is an odd-prime number.

\subsection{Pauli-based computation with qubits}\label{subsec: PBC with qubits}

PBC is a model for universal quantum computation with qubits that has seen little research since its proposal by Bravyi, Smith, and Smolin back in 2016~\cite{BSS2016}. According to its original formulation, PBC is driven by a sequence of (at most) $n$ (adaptively chosen) measurements of independent and compatible Pauli observables on $n$ qubits initialized in the input state $\left|T \right>^{\otimes n} = \left( \left| 0 \right> + e^{i \pi /4}\left| 1 \right> \right)^{\otimes n} / 2^{n/2}.$ Since its computational steps are measurements, it is clear that, much like the 1WM, PBC is comprised within the wide field that is MBQC. Thus, the need for adaptive measurements and classical feed-forward comes as no surprise, since this is an intrinsic characteristic of measurement-based paradigms: Owing to the non-deterministic nature of measurements in quantum mechanics, adaptivity is needed to ensure that the overall quantum computation is still deterministic. 

Besides their shared similarities, it is also interesting to examine the differences between the 1WM and PBC. Notably, the first takes as input an entangled, stabilizer resource state; during the computation, the qubits of this state are subjected to single-qubit (magic or stabilizer) measurements which successively expend the entanglement. On the other hand, PBC uses a separable, magic-state input that is subjected to a sequence of entangling multiqubit Pauli measurements which sequentially spend the magic resource. We understand that whilst the 1WM brings to the foreground the importance of entanglement for universal quantum computation, PBC highlights the role of magic.

A seemingly promising idea is that of casting PBC in a more general light, stepping away from this clear-cut separation of the magic and entanglement resources and considering the use of other input states. For instance, Chamberland and Campbell \cite{ChamberlandCampbell2022} noted that, instead of considering the $\left< \text{Clifford},T\right>$ gate set, it is advantageous to consider an overcomplete set such as, for instance, $\left< \text{Clifford},T,\text{Toffoli}\right>$. In the PBC paradigm, this corresponds to allowing both $\left| T \right>$ and $\left| \mathrm{Toff} \right> = (\left|000 \right> + \left|010 \right> + \left| 100\right> + \left| 111\right>)/2$ magic-state inputs~\cite{Zhou2000}. In doing so, it should be possible to find the solution to a given task via a PBC which requires fewer qubits and Pauli measurements than would be possible if only $\left| T \right>$ magic states were allowed. Exploring these variations to the original PBC framework could unlock resource savings relevant to the current noisy intermediate-scale quantum regime.

It was noted in the Introduction that perhaps one of the main drawbacks of PBC is the non-destructive and multiqubit nature of its measurements. A simple way of understanding how to perform these Pauli measurements is to translate things back to the circuit model. This was the approach followed in Ref.~\cite{PeresGa2022} where the authors were mainly concerned with near- to intermediate-term applications, discussing things at the physical level. This is also the approach taken in Sec.~\ref{subsec: Practical implementation}. Meanwhile, thinking in long-term fault-tolerant quantum computing, we note that stabilizer codes provide a natural way of performing non-destructive measurements of nonlocal Pauli  observables~\cite{Litinski2018latticesurgery, Chamberland+2022catcodes, ChamberlandCampbell2022}. In this paradigm, the computation can be divided into two separate tasks. First, that of the offline preparation of the resource states in magic-state factories, taking a time $t_M$; and second, the actual PBC sequence, happening in a time-frame $t_{\text{PBC}}$~\cite{ChamberlandCampbell2022}. In principle, $t_M$ can be made as small as desirable by increasing the number of magic-state factories, which means that $t_{\text{PBC}}$ should determine the overall runtime. This motivates finding ways of improving $t_{\text{PBC}},$ for instance, by (somehow) parallelizing the computation. Unfortunately, such strategies often come at the expense of a larger number of qubits~\cite{Litinski2019gameofsurfacecodes, ChamberlandCampbell2022}. Increasingly more work is arising that focuses on understanding these space-time trade-offs in fault-tolerant architectures, and PBC seems to be a natural way of thinking about these techniques.

PBC has also been explored in the context of circuit compilation and hybrid quantum-classical computation. As illustrated in Ref.~\cite{PeresGa2022}, for some families of quantum circuits dominated by Clifford gates, PBC provides a way of reducing the quantum resources needed for the computation. It is also a framework wherein hybrid computation can be implemented straightforwardly allowing for the simulation of a certain number, $k$, of virtual qubits at the cost of a sampling complexity that scales as $N=O(2^{0.7374k})$. Sec.~\ref{subsec: Hybrid computation} is focused on exploring this hybrid computation setting further. There, we derive upper bounds for the sampling complexity for qutrits and ququints; we also establish lower bounds for this quantity, something that, to the best of our knowledge, was missing from the literature up to this point.

Finally, it is worth noting that PBC has also made some appearances in recent works related to quantum complexity and classical simulation of quantum circuits~\cite{YogaJS2019, Zurel2020, Zurel2023, Kocia2022}. The usefulness in this context seems to stem from the fact that all the magic is pushed to the preparation of the input state, whilst the computation itself is simply driven by Pauli measurements. For instance, in Ref.~\cite{Zurel2020}, the authors constructed a hidden variable model where (i) any quantum state can be positively represented in the considered phase space and (ii) the update rule of quantum states under Pauli measurements is probabilistic. Because Pauli measurements are sufficient for universal quantum computation, this proves that their hidden variable model describes all possible quantum computations.

This review highlights how interest in PBC is starting to arise in some contexts like fault tolerance, classical simulation, circuit compilation, and hybrid computation. These works, while sharing some common ground in their use of this computational model, have mostly occurred disjointedly from one another. Casting our eyes to the future, it is important to start connecting these (currently separate) contributions, understanding how they can be used to enhance one another, and creating a clearer picture of PBC, its peculiarities, benefits, and drawbacks compared to other models for universal quantum computation. Additionally, expanding the reach of PBC into other fields and contexts should shape new perspectives and originate interesting new results.

\subsection{Generalized Pauli and Clifford groups}\label{subsec: Pauli + Clifford groups}

The single-qubit Pauli operators $X$ and $Z$ can be easily generalized for $p$-dimensional qudits in the following manner~\cite{Gottesman1999}:
\begin{equation}
    X \left| j \right> = \left| j + 1\right> \text{ and } Z \left| j \right> = \omega ^{j} \left| j \right> \,,
    \label{eq: generalized X and Z}
\end{equation}
where $\omega = e^{2\pi i / p}$ is the $p$th root of unity and addition is carried out modulo $p$. The Pauli group on one qudit is generated by these operators together with the phase $\omega$: $\mathcal{P}_1= \left< \omega, X, Z\right>$.

Any Pauli operator $P$ on $n$ qudits can be written as:
\begin{equation}
    M = \omega ^{\lambda} X(\boldsymbol{x}) Z(\boldsymbol{z})\,,
    \label{eq: generic Pauli operator/mmt}
\end{equation}
with $\lambda \in \mathbb{F}_p$ and $\boldsymbol{x},\boldsymbol{z} \in \mathbb{F}_p^n$ so that $X(\boldsymbol{x}) = X^{x_1}\otimes \dots \otimes X^{x_n}$ [and equivalently for $Z(\boldsymbol{z})$]. An $n$-qudit state that is a simultaneous +1 eigenstate of $n$ independent and pairwise commuting Pauli operators is known as a stabilizer state.
\begin{table}[]
  \vspace{-0.21cm}\caption{Transformation of the two generators of the single-qudit Pauli group ($X$ and $Z$) under conjugation by the single-qudit Clifford gates $F$ and $S$ (top), and of the four generators of the two-qudit Pauli group ($X\otimes I$, $I\otimes X$, $Z\otimes I$, and $I\otimes Z$) under the action of the two-qudit Clifford unitary $\mathrm{SUM}_{1,2}$ (bottom).}
  \begin{tabular}{>{\centering\arraybackslash}p{0.6cm} >{\centering\arraybackslash}p{0.6cm} >{\centering\arraybackslash}p{0.6cm}}
  \hline\hline
      & $X$  & $Z$           \\ \hline
  $F$ & $Z$  & $X^{\dagger}$ \\
  $S$ & $XZ$ & $Z$   \\
  \hline\hline
  \end{tabular}
  
  \vspace{0.3cm}
  \begin{tabular}{>{\centering\arraybackslash}p{1.2cm} >{\centering\arraybackslash}p{1.2cm} >{\centering\arraybackslash}p{1.2cm} >{\centering\arraybackslash}p{1.2cm} >{\centering\arraybackslash}p{1.2cm}}
                \hline\hline
                 & $X\otimes I$ & $I\otimes X$ & $Z\otimes I$ & $I\otimes Z$          \\ \hline
  $\mathrm{SUM}_{1,2}$ & $X\otimes X$ & $I\otimes X$ & $Z\otimes I$  & $Z^{\dagger}\otimes Z$ \\
  \hline\hline
  \end{tabular}
  \label{tab: Conjugation of the Pauli generations}
\end{table}

The Clifford group is defined as the set of unitary operators $U$ which map the Pauli group, $\mathcal{P}_n$, onto itself under conjugation:
\begin{equation}
    \mathcal{C}_2 = \{ U: \, U \mathcal{P}_n U^{\dagger} = \mathcal{P}_n \}.
\end{equation}

In the qubit setting, it is known that this group is generated by the Hadamard, phase, and controlled-\textsc{not} gates. For $p$-dimensional qudits, the generalization of the Hadamard gate is the so-called Fourier gate~\cite{Gottesman1999}:
\begin{equation}
    F \left| j \right> = \frac{1}{\sqrt{p}} \sum_{k=0}^{p-1} \omega^{j k} \left| k \right>\,.
    \label{eq: Fourier gate}
\end{equation}
Next, the generalization of the phase gate~\cite{Gottesman1999} can be written as
\begin{equation}
    S \left| j \right> = \omega^{j (j-1) 2^{-1}} \left| j \right>\,,
    \label{eq: generalized S gate}
\end{equation}
where $2^{-1}$ is understood to be the inverse of $2$ in the finite field $\mathbb{F}_p$. Finally, the two-qudit \textsc{sum} gate can be regarded as the generalization of the \textsc{cnot} gate~\cite{Gottesman1999}:
\begin{equation}
    \mathrm{SUM}_{1,2} \left| j \right>\left| k \right> = \left| j \right>\left| k + j \right>\,,
    \label{eq: SUM gate}
\end{equation}
with the addition performed modulo $p$.

These gates have been shown to generate any Clifford unitary up to a global phase~\cite{Clark2006}, so that $\mathcal{C}_2 = \left< F, S, \mathrm{SUM}\right>.$ Quantum circuits comprised of stabilizer states, Clifford gates, and Pauli measurements are called stabilizer circuits.

In Table~\ref{tab: Conjugation of the Pauli generations}, we see how the generators of the Pauli group are transformed under conjugation by the Clifford generators. These transformation rules give an efficient way of tracking the evolution of the state of stabilizer circuits. Because of that, these quantum circuits are efficiently simulable in a classical computer; this is the essence of the famous Gottesman-Knill theorem~\cite{GottesmanPhD}.

\subsection{Universality via magic-state injection}\label{subsec: Universality via MSI}

Stabilizer circuits are not universal for quantum computation. To achieve universality, some kind of \textit{magic} (i.e., non-stabilizer) operation is needed. More accurately, in Ref.~\cite{CampbellAB2012}, the authors combined Theorem 7.3 of Ref.~\cite{Nebe2001} and Corollary 6.8.2 of Ref.~\cite{Nebe2006} to note that any magic gate supplementing the Clifford group is enough to achieve universal quantum computation with odd-prime qudits.

Here, we consider the diagonal unitaries $U_v$ defined in Ref.~\cite{HowardVala2012}. For qutrits ($p=3$), we write:
\begin{equation}
    U_v \equiv U_{(v_0,v_1,v_2)} = \sum_{k=0}^{2} \zeta^{v_k} \left| k\right> \left< k \right|\,,
\end{equation}
with $v=\left( 0,\,6z^{\prime}+2\gamma^{\prime}+3\epsilon^{\prime},\, 6z^{\prime}+\gamma^{\prime}+6\epsilon^{\prime} \right)\text{ mod }9$, $z^{\prime},\gamma^{\prime},\epsilon^{\prime}\in \mathbb{F}_3,$ and $\zeta = e^{2\pi i / 9}$ \footnote{Note that, both here and for $p>3$, $\gamma^{\prime}$ must necessarily be different from 0, otherwise the operation is a (diagonal) Clifford. For details see~\cite{HowardVala2012}.}.
For instance, for $z^{\prime}=1,$ $\gamma^{\prime}=2$, and $\epsilon^{\prime}=0$, we have $U_{(0,1,8)} = \left| 0 \right> \left< 0 \right| + e^{2\pi i / 9}\left| 1 \right> \left< 1 \right| + e^{- 2\pi i / 9}\left| 2 \right> \left< 2 \right|.$

For $p>3$, we have:
\begin{equation}
    U_v \equiv U_{(v_0,\ldots,v_{p-1})} = \sum_{k=0}^{p-1} \omega^{v_k} \left| k\right> \left< k \right|\,,
\end{equation}
where $v_k=12^{-1} k \{ \gamma^{\prime} + k \left[ 6z^{\prime} + \gamma^{\prime} \left( 2k - 3\right)\right] \} + k\epsilon^{\prime}$ and $z^{\prime},\gamma^{\prime},\epsilon^{\prime}\in \mathbb{F}_p.$
For instance, for $p=5$ and taking $z^{\prime}=1,$ $\gamma^{\prime}=4$, and $\epsilon^{\prime}=0$, we have $U_{(0,3,4,2,1)} = \left| 0 \right> \left< 0 \right| + e^{-4\pi i / 5}\left| 1 \right> \left< 1 \right| + e^{- 2\pi i / 5}\left| 2 \right> \left< 2 \right| + e^{+4\pi i / 5}\left| 3 \right> \left< 3 \right| + e^{+ 2\pi i / 5}\left| 4 \right> \left< 4 \right|.$

These gates can be implemented via the injection of magic states which we will denote $\left| T_v\right>$ (by analogy with the $\left| T \right>$ magic state for qubits), together with the gadget described in Ref.~\cite{Nadish2020} and depicted in Fig.~\ref{fig: Gadget for the injection of magic gates U_v.}. Protocols for the distillation of the states $\left| T_v\right>=U_{v} \left| + \right>$ were described in Ref.~\cite{CampbellAB2012} and shown to tolerate higher error rates than any protocol described so far for qubits.
\begin{figure}[t]
    \centering
    \begin{tikzpicture}
      \node[scale=1.0] {
        \tikzset{
            my label/.append style={above right,xshift=0.45cm,yshift=-0.25cm}
        }

        \begin{quantikz}[thin lines]
           \lstick{\ket{T_{v}}}  &\qw         &\ctrl{1}             &\gate{U_v \left(X^{\dagger}\right)^{\sigma} U_v^{\dagger}}  &\qw\rstick{$U_v$ \ket{\Psi_{in}}}\\
           \lstick{\ket{\Psi_{in}}}     &\gate{F^2}  &\gate{\mathrm{SUM}}  &\meter{$\sigma$}\vcw{-1}
        \end{quantikz}      
      };
    \end{tikzpicture}
    \caption{Implementation of the magic gate $U_v$ via the gadget described in~\cite{Nadish2020}.}
    \label{fig: Gadget for the injection of magic gates U_v.}
\end{figure}
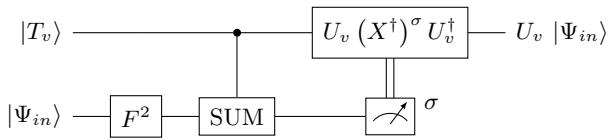

\section{Pauli-based computation with odd-prime qudits}\label{sec: PBC for qudits}

In this section, we generalize the Pauli-based model of quantum computation to $p$-dimensional systems, with $p$ being an odd-prime integer.

\subsection{Proof of universality}\label{subsec: Proof of universality}

We follow a proof analogous to that performed in Ref.~\cite{BSS2016}. We start from a unitary quantum circuit on $n$ qudits, written in terms of the Clifford+$U_v$ (universal) gate set, with $t$ non-Clifford gates $U_v$ and $m$ final $Z$ measurements. Using the gadget in Fig.~\ref{fig: Gadget for the injection of magic gates U_v.}, this circuit can be translated into an adaptive Clifford circuit on $n+t$ qudits initialized in the state $\left| 0 \right>^{\otimes n}\left|T_v\right>^{\otimes t}$, with $t$ intermediate measurements and $m$ final measurements \footnote{The Clifford corrections associated with the gadget in Fig.~\ref{fig: Gadget for the injection of magic gates U_v.} have been calculated in Appendix~\ref{app: Clifford corrections for the gadget}, together with the transformation they induce on the Pauli generators $X$ and $Z$. Knowing these transformations is essential to the actual implementation of the procedure described herein.}.

To prove the universality of PBC, we describe an efficient procedure to transform any such universal quantum circuit, into a corresponding standard PBC consisting of, at most, $t$ measurements of $t$ independent and pairwise commuting Pauli operators on the $t$ qudits in the magic state $\left|T_v\right>^{\otimes t}$. We start by creating an operator list that contains $n$ (dummy) operators of the form $ Z(\boldsymbol{a}_i)\otimes I^{\otimes t} \in \mathcal{P}_{n+t}$, where each $\boldsymbol{a}_i$ is a bit string with $0$ in every bit, except in the $i$th position, so that:
\begin{equation}
    \textsc{list}: Z_1, Z_2, \ldots, Z_n\,.
\end{equation}
At the end of the procedure described below, this list will hold a set of at most $n+t$ independent and compatible Pauli operators. The $t$ operators added to it correspond to those that are measured in the quantum hardware.

Because the quantum circuit obtained by magic-state injection is comprised only of Clifford operations, we can (efficiently) backward-propagate each single-qudit $Z$ measurement until it reaches the beginning of the circuit. We start with each of the $t$ intermediate measurements and then move on to the $m$ final measurements. For the $i$th $Z$ measurement handled in this way, we obtain a Pauli operator $M_i \in \mathcal{P}_{n+t}$ at the beginning of the quantum circuit. The way to handle the measurement of this operator varies according to one of three options:

\vspace{0.25cm}
\noindent\textbf{Case 1.} The Pauli operator $M_i$ is such that the following commutation relationship with an operator $A_j$ in $\textsc{list}$ holds:
\begin{equation}
    M_i A_j = \omega^{\varphi_i} A_j M_i,
\end{equation}
with $\varphi_i \neq 0$. If this is so, it is easy to show that all $p$ possible outcomes of $M_i$ are equiprobable with probability $1/p$. Thus, instead of measuring $M_i$, we can simply get its outcome, $\sigma,$ in a classical way by drawing a random number from the set $\{0,1,\ldots,p-1\}.$ This allows us to remove $M_i$ from the quantum circuit. However, we still need to ensure that the state of the system is changed to the appropriate eigenstate of $M_i$, which we would have gotten had we actually performed the measurement and obtained $\sigma.$ This is done by introducing the following unitary in place of $M_i$:
\begin{equation}
    V_{(M_i,\sigma)}^{(A_j,a)} = \frac{\omega^{a}}{\sqrt{p}} \sum_{k=0}^{p-1} \omega^{-k(\sigma - a)}M_i^k A_j^{-k-1}\,.
    \label{eq: Clifford V(a,sigma)}
\end{equation}
It can be shown that this operator is a Clifford unitary which prepares the appropriate $\sigma$ eigenstate of $M_i$ (cf. Appendix~\ref{app: Proving V(a,s)}). Once this is done, we store the outcome $\sigma$ in a separate list for the outcomes and move on to the subsequent measurement.

\vspace{0.25cm}
\noindent\textbf{Case 2.} Instead, $M_i$ might commute with each operator $A_j$ in $\textsc{list}$ and depend on a certain subset $S$ of those operators so that:
\begin{equation*}
    M_i = \prod_{A_j \in S} A_j\,. 
\end{equation*}
Then its outcome, $\sigma$, can again be obtained classically using the outcome $a_j$ of each $A_j\in S$: $\sigma = \left( \sum_{j:A_j\in S} a_j \right) \text{ mod } p$; $\sigma$ then needs to be added to the list of outcomes.
\begin{figure*}[t]
    \centering
    \begin{tikzpicture}
\node[scale=0.66]{
    \begin{quantikz}[thin lines]
        \lstick[wires=3]{\ket{\Psi_{in}}}  &\gate[wires=3,style={fill=gray!15, rounded corners}]{X^{a}\otimes Z^{b}\otimes X^{c}Z^{d}}  &\qw\rstick[wires=3]{\ket{\Psi_{out}}} \\
                                           &\qw                                                                                         &\qw \\
                                           &\qw                                                                                         &\qw
    \end{quantikz}
$\rightarrow$
    \begin{quantikz}[thin lines]
        \lstick[wires=3]{\ket{\Psi_{in}}}  &\qw\gategroup[wires=4,steps=10,style={rounded corners,fill=gray!15, inner xsep=2pt},background]{{$M = X^{a}\otimes Z^{b}\otimes X^{c}Z^{d}$}}  &\qw                 &\qw                             &\gate{\mathrm{SUM}^a}  &\qw                    &\qw                    &\qw                    &\qw                                &\qw                            &\qw&\qw\rstick[wires=3]{\ket{\Psi_{out}}} \\
                                           &\gate{F^{\dagger}}                                                                                                                             &\qw                 &\qw                             &\qw                    &\gate{\mathrm{SUM}^b}  &\qw                    &\qw                    &\qw                                &\gate{F}                       &\qw&\qw \\
                                           &\gate{X^{2^{-1}}}                                                                                                                              &\gate{Z^{2^{-1}d}}  &\gate{(S^{\dagger})^{c^{-1}d}}  &\qw                    &\qw                    &\gate{\mathrm{SUM}^c}  &\gate{S^{c^{-1}d} }    &\gate{(Z^{\dagger})^{2^{-1}d}}     &\gate{(X^{\dagger})^{2^{-1}}}  &\qw&\qw \\
        \lstick{$\ket{0}_{aux}$}           &\gate{F}                                                                                                                                       &\qw                 &\qw                             &\ctrl{-3}              &\ctrl{-2}              &\ctrl{-1}              &\qw                    &\qw                                &\gate{F^{\dagger}}             &\meter{$\sigma$}
    \end{quantikz}
};
\end{tikzpicture}
    \caption{Scheme for carrying out the measurements in the adaptive PBC for the case of $p$-dimensional qudits (generalization for qudits of the first scheme in Ref.~\cite{PeresGa2022}). The gray box corresponds to the implementation of the Pauli measurement $M = X^{a}\otimes Z^{b}\otimes X^{c}Z^{d}$, as described in the main text. After each measurement, the auxiliary qudit can be reset to $\left| 0 \right>$, and another Pauli operator can be subsequently measured. The factors $2^{-1}$ and $c^{-1}$ represent, respectively, the inverse of $2$ and of $c$ in the finite field $\mathbb{F}_p$; note that $c$ is necessarily different from $0$.}
    \label{fig: Our proposal for performing Pauli measurements on qudits}
\end{figure*}
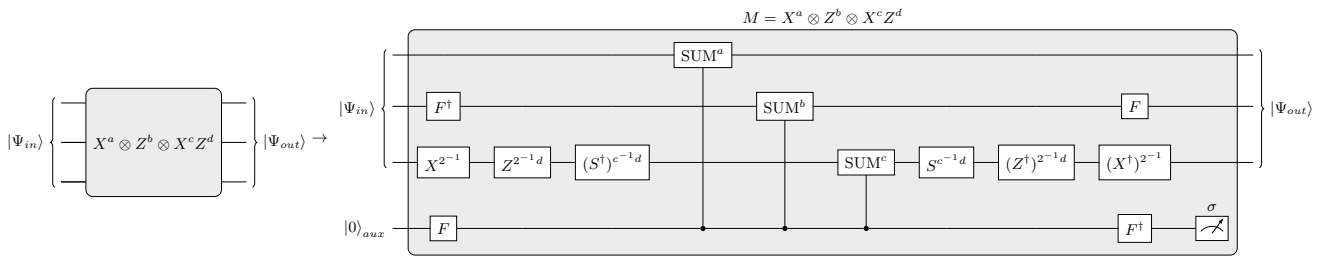

\vspace{0.25cm}
\noindent\textbf{Case 3.} The final option is that $M_i$ commutes with all of the operators in $\textsc{list}$ and is independent from them. This Pauli operator can be written as the tensor product of two operators $M_i^{\prime} \in \mathcal{P}_{n}$ and $M_i^{\prime \prime} \in \mathcal{P}_{t}$: $M_i = M_i^{\prime}\otimes M_i^{\prime \prime}.$ Because $M_i$ commutes with every operator in $\textsc{list}$ , $M_i^{\prime}$ necessarily has the form: $M_i^{\prime} = Z(\boldsymbol{z}),$ with $\boldsymbol{z} \in \mathbb{F}_p^n,$ hence acting trivially on the $n$ stabilizer qudits. This means that we can simply measure the operator $M_i^{\prime \prime}$ on the magic qudits $\left|T_v\right>^{\otimes t}$ to obtain the outcome, $\sigma$, of the quantum measurement. After doing so, the operator $M_i$ must be added to $\textsc{list}$ and $\sigma$ to the list of outcomes. Note that the argument used herein means that any PBC with an input of the form $\left| 0 \right>^{\otimes n} \left| T_v \right>^{\otimes t}$ can be reduced to a smaller PBC involving only the magic register. The former PBC is often called \textit{generalized} PBC, and the latter \textit{standard} PBC~\cite{BSS2016, PeresGa2022}.

\vspace{0.25cm}
This procedure demonstrates how any universal quantum circuit with $t$ non-stabilizer gates can be reduced (in polynomial time) to a sequence of measurements of independent and pairwise commuting Pauli operators on the $t$ qudits in the magic state $\left|T_v\right>^{\otimes t}$. Thus, PBC is a universal model of quantum computation with qudits.

Since there are at most $t$ independent and compatible Pauli operators on $t$ qudits, each PBC involves at most $t$ Pauli measurements. The outcome of the computation is given by the last $m$ values stored in the outcome list.

We reiterate that this generalization holds only for odd-prime qudits. There are several reasons for this. First, as explained in Sec.~\ref{subsec: Universality via MSI}, the claim that the Clifford group supplemented by any non-stabilizer gate is enough for performing any universal quantum computation holds only for odd-prime qudits. Moreover, the magic gates $U_v$ considered have been defined only for that specific case~\cite{HowardVala2012}. Finally, as explained at the end of Appendix~\ref{app: Proving V(a,s)}, the proof that the unitary $V_{(M_i,\sigma_i)}^{(A_j,a)}$, defined in Eq.~\eqref{eq: Clifford V(a,sigma)}, belongs to the Clifford group also holds only for the odd-prime case.

\subsection{Practical implementation}\label{subsec: Practical implementation}

It is natural to wonder how such a sequence of Pauli measurements on $t$ qudits can be implemented in practice. As it turns out, \textit{indirect} measurements are an important topic across several different quantum information and computation applications ~\cite{KnillOS2007,Dobsicek+2007,Dressel+2018,WangHB2019,MitaraiFuji2019}. In Ref.~\cite{PeresGa2022}, such ideas were explored in the context of PBC with qubits; here, we generalize and improve the proposals therein for the case of odd-prime-dimensional qudits. 

\subsubsection{Method 1: one auxiliary qudit and quadratic depth}

Suppose we want to measure a generic Pauli operator of the form given by Eq.~\eqref{eq: generic Pauli operator/mmt} on an arbitrary $t$-qudit state $\left| \Psi_{in} \right>$. To do this, we bring in an auxiliary qudit in state $\left| 0 \right>$ and transform it using the Fourier gate. On each of the $t$ (computational) qudits, the Pauli measurement $M$ may assume one of four possible forms: $I$, $X^{a_i}$, $Z^{b_i}$ or $X^{c_i}Z^{d_i}$, with $a_i,\,b_i,\,c_i,\,d_i \in \mathbb{F}_p$. In the case of the identity, nothing needs to be done with that qudit. When we have $X^{a_i}$ on qudit $i$, we simply apply $a_i$ copies of the \textsc{sum} gate, controlled on the auxiliary qudit and targeting the $i$th computational qudit. In the case where we have $Z^{b_i}$ on qudit $i$, we apply $b_i$ copies of the \textsc{sum} gate, again controlled on the auxiliary qudit and targeting the $i$th computational qudit, conjugated by the Fourier gate on the target qudit.

These first three possibilities are analogous to those in the qubit case, except that here we need to account for the possibility of having different powers of $X$ and $Z$; this is done via multiple applications of the \textsc{sum} gate (as explained above). If $c_i = d_i$, the fourth (and last) possibility can be regarded as the analog of measuring $Y$ in the qubit setting. However, if $c_i \neq d_i$, there is no qubit parallel. In this case, we need to apply the \textsc{sum} gate a number of times $c_i$ equal to the power of $X$; those gates are then conjugated (on the target computational qudit) by the Clifford unitary $\left( X^{\dagger}\right)^{2^{-1}} \left( Z^{\dagger}\right)^{2^{-1}d_i} S^{c_i^{-1}d_i}$, with $2^{-1}$ and $c_i^{-1}$ meaning, respectively, the inverse of $2$ and $c_i$ in the finite field $\mathbb{F}_p$.

Fig.~\ref{fig: Our proposal for performing Pauli measurements on qudits} shows an explicit example of how to measure the three-qudit Pauli operator $M=X^{a}\otimes Z^{b} \otimes X^{c}Z^{d}.$ This example incorporates all non-trivial possible cases. As seen from Eq.~\eqref{eq: generic Pauli operator/mmt}, in general, these measurements may have a global phase factor of the form $\omega^{\lambda},$ so that instead of measuring $M$ we might be interested in measuring $M^{\prime}=\omega^{\lambda} X^{a}\otimes Z^{b} \otimes X^{c}Z^{d}.$ This is not an issue, as the corresponding outcome, $\sigma^{\prime}$, may be inferred by re-interpreting the outcome, $\sigma$, obtained by measuring $M$: $\sigma^{\prime} \coloneqq (\sigma + \lambda) \text{ mod } p.$

Using this scheme, we obtain what the authors in Ref.~\cite{PeresGa2022} call \textit{adaptive PBC circuits} whose features can be easily characterized. Namely, the number of \textsc{sum} gates is upper bounded by $N_{\mathrm{SUM}}^{up.} = (p-1)t^2,$ where $p$ is the dimension of the qudits and $t$ is the number of non-Clifford $U_{v}$ gates in the original quantum circuit. The depth exhibits a similar upper bound of $O\left( (p-1)t^2 \right).$ 

\subsubsection{Method 2: $t$ auxiliary qudits and linear depth}
\begin{figure*}[t]
    \centering
    \begin{tikzpicture}
\node[scale=0.66]{
    \begin{quantikz}[thin lines]
        \lstick[wires=3]{\ket{\Psi_{in}}}  &\gate[wires=3,style={fill=gray!15, rounded corners}]{X^{a}\otimes Z^{b}\otimes X^{c}Z^{d}}  &\qw\rstick[wires=3]{\ket{\Psi_{out}}} \\
                                           &\qw                                                                                         &\qw \\
                                           &\qw                                                                                         &\qw
    \end{quantikz}
$\rightarrow$
    \begin{quantikz}[thin lines]
        \lstick[wires=3]{\ket{\Psi_{in}}}         & \qw\gategroup[wires=6,steps=10,style={rounded corners,fill=gray!15, inner xsep=2pt},background]{{$M = X^{a}\otimes Z^{b}\otimes X^{c}Z^{d}$}}  &\qw                 &\qw                             &\gate{\mathrm{SUM}^a}  &\qw                    &\qw                    &\qw                    &\qw                                &\qw                            &\qw&\qw\rstick[wires=3]{\ket{\Psi_{out}}} \\
                                                  & \gate{F^{\dagger}}                                                                                                                             &\qw                 &\qw                             &\qw                    &\gate{\mathrm{SUM}^b}  &\qw                    &\qw                    &\qw                                &\gate{F}                       &\qw&\qw \\
                                                  & \gate{X^{2^{-1}}}                                                                                                                              &\gate{Z^{2^{-1}d}}  &\gate{(S^{\dagger})^{c^{-1}d}}  &\qw                    &\qw                    &\gate{\mathrm{SUM}^c}  &\gate{S^{c^{-1}d} }    &\gate{(Z^{\dagger})^{2^{-1}d}}     &\gate{(X^{\dagger})^{2^{-1}}}  &\qw&\qw \\
        \lstick[wires=3]{\ket{\mathrm{GHZ}_{3}}}  & \qw                                                                                                                                            &\qw                 &\qw                             &\ctrl{-3}              &\qw                    &\qw                    &\qw                    &\qw                                &\gate{F^{\dagger}}             &\meter{$\sigma_1$} \\
                                                  & \qw                                                                                                                                            &\qw                 &\qw                             &\qw                    &\ctrl{-3}              &\qw                    &\qw                    &\qw                                &\gate{F^{\dagger}}             &\meter{$\sigma_2$} \\
                                                  & \qw                                                                                                                                            &\qw                 &\qw                             &\qw                    &\qw                    &\ctrl{-3}              &\qw                    &\qw                                &\gate{F^{\dagger}}             &\meter{$\sigma_3$}
    \end{quantikz}
};
\end{tikzpicture}
    \caption{Our second scheme for carrying out the measurements in an adaptive PBC with $p$-dimensional qudits. The gray box corresponds to the implementation of the Pauli measurement $M = X^{a}\otimes Z^{b}\otimes X^{c}Z^{d}$, as described in the main text. After each measurement, the auxiliary qudits can be re-initialized in the appropriate GHZ state, and another Pauli operator can be measured. The factors $2^{-1}$ and $c^{-1}$ represent, respectively, the inverse of $2$ and of $c$ in the finite field $\mathbb{F}_p$; note that $c$ is necessarily different from $0$. Unlike what happens with the first method, here the \textsc{sum} gates are parallelized so that each Pauli measurement is performed in constant depth.}\label{fig: 2nd mmt scheme (constant depth)}
\end{figure*}
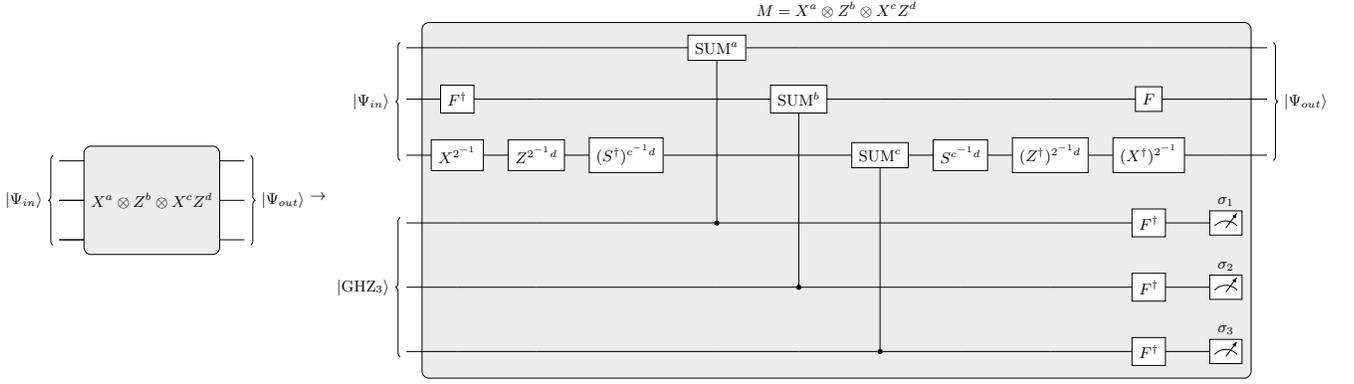

A trade-off is possible between the depth and the width of these adaptive PBC circuits. Specifically, instead of using a single auxiliary qudit, one can use $t$ auxiliary qudits in the (generalized) $t$-qudit Greenberger--Horne--Zeilinger (GHZ) state:
\begin{equation*}
    \left| \mathrm{GHZ}_t \right> = \frac{1}{\sqrt{p}} \sum_{k=0}^{p-1} \left| k \right>^{\otimes t}\,.
    \label{eq: GHZ qudit state}
\end{equation*}
This state can be prepared by a quantum circuit of constant depth, as proved in Appendix~\ref{app: Preparation of an n-qudit GHZ}.

With access to such auxiliary states, each Pauli operator $M_i$ in the PBC can be measured in constant depth. To that end, we apply a procedure analogous to that of the first method, except now the \textsc{sum} gates are controlled on the $i$th auxiliary qudit and target the $i$th computational qudit. After that, each auxiliary qudit is rotated by the inverse of the Fourier gate and measured in the computational basis to yield outcomes $\sigma_1,\sigma_2,\ldots,\sigma_t$. It is  fairly easy to show that doing so leaves the computational qudits in the $\sigma$ eigenstate of $M_i$, with $\sigma$ given by $\sigma = \left( \sum_{i=1}^{t} \sigma_i \right)\text{ mod } p.$ See Fig.~\ref{fig: 2nd mmt scheme (constant depth)} for a depiction of this measurement scheme.

The sequence of \textsc{sum} gates in the measurement procedure described above can now be implemented with a constant depth of at most $(p-1).$ Therefore, although the overall gate count still scales as $O((p-1)t^2),$ the depth of the adaptive PBC circuits constructed in this way has been improved to $O((p-1)t),$ at the expense of an increased circuit width and a small additional classical processing associated to the $\left| \mathrm{GHZ}_t \right>$ preparation.

\subsubsection{Further improvements}\label{subsubsec: Further improvements}

The two-qudit gate count and depth previously obtained can actually be further improved through an observation that has no qubit counterpart. Consider a generic $t$-qudit Pauli operator $M$ of the form given by Eq.~\eqref{eq: generic Pauli operator/mmt} whose measurement yields an outcome $\sigma$. Using the constructions previously discussed, performing this measurement involves a number of \textsc{sum} gates given by:
\begin{equation}
    N_{\mathrm{SUM}} = \sum_{j=1}^{t} \left( x_j + z_j \delta_{x_j,0}\right)\,.
\end{equation}
However, instead of being limited to measuring only this operator, one can measure any of $(p - 1)$ Pauli operators of the form:
\begin{equation}
    M^{\prime} = \omega^{k \lambda} X(k\boldsymbol{x}) Z(k\boldsymbol{z})\,,
    \label{eq: Any allowed Mprime}
\end{equation}
where $k\in \mathbb{F}_p \backslash \{0\}$ and $k\boldsymbol{v} = \left( kv_1,\,\ldots,\,kv_t\right) \in \mathbb{F}_p^{t}.$ The corresponding measurement outcome, $\sigma^{\prime}$, must be re-interpreted so that:
\begin{equation}
    \sigma \coloneqq \sigma^{\prime} k^{-1} + \boldsymbol{x}\cdot \boldsymbol{z} (k-1)/2
\end{equation}
For proof of this result, see Appendix~\ref{app: Improvement of the practical mmts}.

The associated gate count is then given by:
\begin{equation}
    N_{\mathrm{SUM}}^{\prime} = \sum_{j=1}^{t} \left( kx_j + kz_j \delta_{x_j,0}\right)\,.
\end{equation}
Note that while $kx_j,kz_j \in \mathbb{F}_p$, all the additions in the equation above are performed over $\mathbb{Z}$. Finding the best number of two-qudit gates reduces to finding the value of $k$ for which the equation above yields the minimum value. This is something that can be done efficiently in the assisting classical computer, before calling on the quantum processing unit (QPU) to perform the quantum measurement.

Albeit this reduction corresponds only to a constant saving in each measured Pauli operator, this is still of practical relevance for implementations in earlier prototypes. The same can be said about the upper bound on the depth and two-qudit gate count of the entire PBC circuits: $O\left( pt^2 / 2\right)$ with Method 1, which corresponds to a reduction by a (constant) factor of 2. Similarly, if using Method 2, this observation leads to analogous (constant) improvements, namely, adaptive PBC circuits with $O\left( pt / 2\right)$ depth.

\subsection{Hybrid computation}\label{subsec: Hybrid computation}

Suppose we want to estimate the probability of getting the outcome $y=0$ as the output of a standard PBC on $t$ qudits. If the available QPU has a number of qudits greater than $t$, this problem is straightforward. We can run the computation $N$ times and count the number of occurrences of the outcome $y=0$, $N_0$, estimating the desired probability as $q_0 = N_0/N$. If we wish our result to be accurate up to a precision $\epsilon$, then $N=O\left(\epsilon^{-2} \right)$ samples suffice.

Instead, consider that the QPU has access to only $t-k$ qudits. In this case, the PBC is too large to run directly on the available hardware. As it turns out, this problem has a simple solution in the PBC framework~\cite{BSS2016, PeresGa2022}. We start by noting that the input state for the first $k$ qudits can be decomposed into a linear superposition of stabilizer states:
\begin{equation}
    \left| T_v \right> \left< T_v \right|^{\otimes k} = \sum_{j=0} c_j \left| \phi_j \right> \left< \phi_j \right|\,,
    \label{eq: magic state decomposition}
\end{equation}
where the coefficients $c_j$ are real. Each stabilizer state $\left| \phi_j \right>$ can be obtained from the ground state $\left| 0 \right>^{\otimes k}$ via the application of a $k$-qudit Clifford unitary $C_j$ which can be determined efficiently. Thus, we can write the input state of the PBC as 
\begin{equation*}
    \left| T_v \right> \left< T_v \right|^{\otimes t} = \sum_{j=0} c_j C_j \left| 0 \right> \left< 0 \right|^{\otimes k} C_j^{\dagger} \otimes \left| T_v \right> \left< T_v \right|^{\otimes (t-k)}\,.
\end{equation*}

To estimate the desired probability, $q_0$, we can sample the state of the first $k$ qudits from the probability distribution given by $\bar{q}_{j} = \left| c_j \right| / \lVert \boldsymbol{c} \rVert_1 ,$ where $\lVert \boldsymbol{c} \rVert_1 = \sum_j \left| c_j \right|$ denotes the $\ell_1$ norm of the coefficients of the decomposition in Eq.~\eqref{eq: magic state decomposition}~\cite{HowCamp2017, Cai+2022}. Once we have done so, we are left with a PBC with input state $C_j \left| 0 \right>^{\otimes k} \otimes \left| T_v \right>^{\otimes (t-k)}$. Such a (generalized) PBC on $t$ qudits can be reduced to a (standard) PBC acting only on the magic register, i.e., a PBC on $(t-k)$ qudits (recall Sec.~\ref{subsec: Proof of universality}). We can then run that computation to get an outcome $m \in \mathbb{F}_p$.

It can be shown that the random variable
\begin{equation*}
    \eta = \frac{1}{p} + \frac{1}{p} \lVert \boldsymbol{c} \rVert_1 \mathrm{sign}(c_j) \sum_{\mu=1}^{p-1} \omega^{\mu m}
\end{equation*}
is an unbiased estimator of the probability, $q_0$, where $\mathrm{sign}(c_j)$ denotes the signal of the appropriate coefficient $c_j$ and $m$ is the outcome obtained from running the corresponding smaller PBC. Thus, repeating the procedure described above a suitable number of times allows us to gather samples to estimate $q_0$. Importantly, the random variable $\eta$ is real and bounded in the interval $\left[ \frac{1}{p}\left(1 - \lVert \boldsymbol{c} \rVert_1 ^2 (p-1) \right) ,  \frac{1}{p}\left(1 + \lVert \boldsymbol{c} \rVert_1 ^2 (p-1) \right) \right]$. Therefore, to get an estimate of $q_0$ to within $\epsilon$ of its actual value with probability at least $(1-q_{fail})$, Hoeffding's inequality informs us that the required number of samples is $O\left( \lVert \boldsymbol{c} \rVert_1 ^2 \epsilon^{-2} \ln\left( 2/q_{fail} \right) \right)$~\cite{HowCamp2017, Cai+2022}. Clearly, in this framework, finding the optimal sampling complexity corresponds to minimizing the $\ell_1$ norm of $\boldsymbol{c}$. This corresponds to calculating a magic monotone known as robustness of magic (RoM)~\cite{HowCamp2017}. The RoM of an arbitrary $n$-qudit state $\rho$ is given by:
\begin{equation}
    \mathcal{R}\left( \rho \right) \coloneqq \min_{\boldsymbol{c}} \Biggl\{ \sum_j \left| c_j \right|;\,\rho = \sum_i c_i \sigma_i \Biggr\}\,,
\end{equation}
where $\sigma_i$ are $n$-qudit stabilizer states.

For the qubit case, the results from Ref.~\cite{Heinrich2019} allow us to upper bound the sampling complexity of this hybrid quantum-classical scenario by $O\left( 2^{0.7374 k} \epsilon^{-2}\right),$ where $k$ is the desired number of virtual qubits.

Analogously to the works in Refs.~\cite{HowCamp2017, Heinrich2019}, we can compute the RoM of tensor products of the magic states $\left| T_v\right>$ for $p$-dimensional systems. This will give us an upper bound for the sampling complexity of hybrid computation in the qudit setting. Here, we consider only qutrits ($p=3$) and ququints ($p=5$). The results obtained for the RoM of multiple copies of such magic states, $\left| T_v \right>$, of higher-dimensional systems can be seen in Table~\ref{tab: RoM results for qutrits and ququints}, and lead to a sampling complexity upper bounded by $O\left( 3^{ 1.0848 k} \epsilon^{-2}\right)$ and $O\left( 5^{ 1.4022 k} \epsilon^{-2}\right)$, respectively, for $p=3$ and $p=5$. We note that the numerical coefficients in the exponents increase with $p$. This seems to suggest that the cost of simulating virtual qudits increases with the system's dimension by an amount that is larger than what can be explained by the dimensionality increase alone (encoded in the base of the exponential).
\begin{table}[]
    \vspace{-0.21cm}\caption{Robustness of magic, $\mathcal{R},$ for the qutrit (top) and ququint (bottom) magic states $\left| T_v \right>$ defined, respectively, by $v=(0,1,8)$ and $v=(0,3,4,2,1).$}
    \centering
    \begin{tabular}{>{\centering\arraybackslash}p{0.6cm} >{\centering\arraybackslash}p{2cm} >{\centering\arraybackslash}p{2.1cm}}
        \hline\hline
        $n$ & $\mathcal{R}\left(\left| T_v \right>^{\otimes n}\right)$ & $\mathcal{R}\left(\left| T_v \right>^{\otimes n}\right)^{2/n}$\\ \hline
        1 & 1.94098 & 3.76741 \\
        2 & 3.44194 & 3.44194 \\
        3 & 5.97505 & 3.29277 \\
        \hline\hline
    \end{tabular}
    
    \vspace{0.3cm}
    \begin{tabular}{>{\centering\arraybackslash}p{0.6cm} >{\centering\arraybackslash}p{2cm} >{\centering\arraybackslash}p{2.1cm}}
        \hline\hline
        $n$ & $\mathcal{R}\left(\left| T_v \right>^{\otimes n}\right)$ & $\mathcal{R}\left(\left| T_v \right>^{\otimes n}\right)^{2/n}$\\ \hline
        1 & 3.43607 & 11.80656 \\
        2 & 9.55197 &  9.55197 \\
        \hline\hline
    \end{tabular}
    \label{tab: RoM results for qutrits and ququints}
\end{table}

To strengthen the latter conjecture, we can compute lower bounds on the cost of this particular scheme for hybrid computation with systems of different dimensions. This is done by lower-bounding the RoM of magic states. In recent work by Leone \textit{et al.}~\cite{LeoneOH2022}, the authors define a measure of the magic of a (pure) state $\left| \psi \right>$ called \textit{stabilizer $\alpha$-Rényi entropy}. In their work, they are focused on qubit systems. Here, we make a straightforward generalization of this measure which applies to systems of arbitrary dimension $p$.

Let us denote by $\tilde{\mathcal{P}}_{n}$ the set of $n$-qudit Pauli operators with global phase equal to $1$. Clearly, this set contains $p^{2n}$ operators. Next, we define $\Xi_{P} \left( \left| \psi \right> \right) \coloneqq p^{-n} \left| \left< \psi \left| P \right| \psi \right> \right|^2$, where $\left| \psi \right>$ is an $n$-qudit pure state \footnote{Note the difference in the definition of $\Xi_{P}\left( \left| \psi \right> \right)$ with respect to the one in~\cite{LeoneOH2022}. There, the authors write only the square of the expectation value of $P$, whilst here we write the squared modulus of $\left< \psi \left| P \right| \psi \right>$. This is because, unlike what happens with qubits, the qudit Pauli operators are not Hermitian; therefore, their eigenvalues are not guaranteed to be real.}. With this definition, the values of $\Xi_{P} \left( \left| \psi \right> \right)$ are guaranteed to be real values such that $\Xi_{P} \left( \left| \psi \right> \right) \geq 0.$ Moreover, it is not hard to show that $\sum_{P\in \tilde{\mathcal{P}}_n} \Xi_{P} \left( \left| \psi \right> \right) = 1.$ Hence, $\{\Xi_{P} \left( \left| \psi \right> \right)\}$ can be regarded as a probability distribution (in perfect analogy to Ref.~\cite{LeoneOH2022}). The $\alpha$-Rényi entropies, $M_{(\alpha,p)}$, associated with $\{\Xi_{P} \left( \left| \psi \right> \right)\}$, are defined as:
\begin{equation}
    M_{(\alpha, p)} \left( \left| \psi \right> \right) \coloneqq \frac{1}{1-\alpha} \log_p \sum_{P\in \tilde{\mathcal{P}}_n} \Xi_{P}^{\alpha} \left( \left| \psi \right> \right) - n\,.
    \label{eq: stabilizer Renyi entropy}
\end{equation}
This definition preserves all the relevant features of the measure defined in Ref.~\cite{LeoneOH2022}; namely, faithfulness, stability under free operations, and additivity. For explicit proofs see Appendix~\ref{app: Gen Stab Renyi Entropies}.

Next, in Ref.~\cite{LeoneOH2022}, the authors note that, for the qubit case, the \nicefrac{1}{2}-Rényi entropy is related to a witness of non-stabilizerness known as the stabilizer-norm (or simply st-norm)~\cite{Campbell2011} defined as~\cite{Campbell2011, HowCamp2017}:
\begin{equation*}
    \mathcal{D} \left( \rho \right) \coloneqq \frac{1}{2^n} \sum_{P\in \tilde{\mathcal{P}}_n} \left| \mathrm{Tr} \left( P \rho \right) \right|\,, 
\end{equation*}
where $\rho$ is some $n$-qubit density matrix. 
Since $\mathcal{D} \left( \rho \right) \leq \mathcal{R} \left( \rho \right)$~\cite{HowCamp2017} and, for a pure state $\rho = \left| \psi \right> \left< \psi \right|$, $M_{\nicefrac{1}{2}} ( \left| \psi \right> ) = 2\log_2 \mathcal{D} \left( \left| \psi \right> \right),$ it is straightforward that $M_{\nicefrac{1}{2}} \left( \left| \psi \right> \right) \leq 2 \log_2 \mathcal{R}\left( \left| \psi \right> \right).$ This can be easily generalized for $p$-dimensional systems: $M_{(\nicefrac{1}{2},p)} \left( \left| \psi \right> \right) \leq 2 \log_p \mathcal{R}\left( \left| \psi \right> \right),$ which allows us to write the following lower bound for the RoM of any (pure) state $\left| \psi \right>$:
\begin{equation}
    \mathcal{R}\left( \left| \psi \right> \right)^{2} \geq p^{M_{\left(\nicefrac{1}{2},p\right)} \left( \left| \psi \right> \right)}\,.
    \label{eq: Hybrid computation cost lower bound}
\end{equation}

Thus, computing the $\nicefrac{1}{2}$-Rényi entropy of $k$ copies of magic states allows us to lower-bound the cost of hybrid computation in the PBC framework. The result in the Supplementary Material of Ref.~\cite{LeoneOH2022} allows us to write that $\mathcal{R}\left( \left| T \right> ^{\otimes k}\right)^{2} \geq 2^{2\{ \log_2 \left[ \nicefrac{(\sqrt{2} + 1)}{2} \right] \}k},$ which means a lower bound on hybrid computation with qubits given by $\Omega \left( 2^{0.5431 k} \epsilon^{-2} \right).$

Computing the $\nicefrac{1}{2}$-Rényi entropy of $k$ copies of the qutrit and ququint magic states leads to the lower bounds of $\Omega\left( 3^{0.7236 k} \epsilon^{-2} \right)$ and $\Omega\left( 5^{0.8544 k} \epsilon^{-2} \right)$ for hybrid computation with qutrits and ququints, respectively. For detailed calculations see Appendix~\ref{app: Gen Stab Renyi Entropies}.

\section{Conclusions}\label{sec:Conclusions}

Albeit most work in quantum computing has been developed for two-level systems, it is conceivable that the future lies in higher-dimensional systems. Motivated by this, we proved the universality of the Pauli-based model of quantum computation with odd-prime qudits. This was done by showing that any universally general quantum circuit with $t$ magic gates $U_v$ can be transformed into a Pauli-based computation on $t$ qudits and (at most) $t$ measurements of independent and pairwise commuting Pauli operators on $t$ qudits.

Additionally, we explained how a sequence of $t$-qudit Pauli measurements can be implemented in practice by describing how it can be translated into adaptive PBC circuits. We presented two different proposals for this translation. In the first, a single auxiliary qudit is needed, so that the final adaptive circuits have $t+1$ qudits and require $O\left( pt^2/2 \right)$ \textsc{sum} gates and depth. The second proposal explores a trade-off between depth and width. It requires $t$ auxiliary qudits but is able to bring the depth down to $O\left( pt /2 \right)\,.$

Next, we analyzed the robustness of magic of tensor products of qutrit and ququints magic states, $\left| T_v \right>.$ The results allowed us to upper bound the sample complexity associated with simulating $k$ virtual qutrits and ququint by $O\left( 3^{ 1.0848 k} \epsilon^{-2}\right)$ and $O\left( 5^{ 1.4022 k} \epsilon^{-2}\right)$, respectively. These results suggest that the cost of hybrid computation increases with the qudit's dimension by an amount that goes beyond what is justifiable by the increase of $p$ alone. The validity of this hypothesis is further supported by the lower bounds computed for $p=2$, $p=3$, and $p=5$, where the same behavior is observed. Specifically, the lower bounds of the sampling complexity of hybrid computation with qubits, qutrits, and ququints are, respectively, $\Omega \left( 2^{0.5431 k} \epsilon^{-2} \right),$ $\Omega\left( 3^{0.7236 k} \epsilon^{-2} \right)$, and $\Omega\left( 5^{0.8544 k} \epsilon^{-2} \right)$.

We would like to point out that both the lower and upper bounds of hybrid computation presented in this paper hold for the particular scheme described. As explained, this was based on a Monte-Carlo-type algorithm whose (exponential) sampling complexity is determined by the square of the robustness of magic. However, other schemes may exist that enable an improvement of this cost. Finding the optimal approach to this problem of extending the quantum memory through the simulation of virtual qudits on an assisting classical (super)computer could be a relevant prospective line of research. Additionally, it would be interesting to confirm our conjecture that hybrid computation becomes increasingly challenging as $p$ increases and understand its underlying cause.

To conclude, we note that while this work broadens the current realm of applicability of PBC, many things still remain unexplored. Notably, it would be interesting to start expressing quantum algorithms in the PBC framework. This is a non-trivial task, but doing so would allow us to compare the quantum resources needed within PBC to those required within the quantum circuit model or the 1WM. Additionally, it could potentially highlight specific scenarios where the use of higher-dimensional qudits brings significant advantages over qubits. It is also important to start connecting the myriad of works employing PBC in very different contexts, using them to gain a better understanding of this computational model, its peculiar features, advantages, and pitfalls. Finding (new) ways of optimizing PBC for near- and intermediate-term applications, but also for fault-tolerant quantum computing, is another relevant future line of research deserving of further attention.

\section*{Acknowledgements}
FCRP would like to thank Adán Cabello and Michael Zurel for showing interest in this problem, Michael de Oliveira and JMB Lopes dos Santos for fruitful exchanges of ideas, and Mark Howard for helpful insights and discussions. The author also appreciates comments on the text made by Mark Howard and Ernesto Galvão.
FCRP is supported by the Portuguese institution FCT – Funda\c{c}\~{a}o para a Ci\^{e}ncia e a Tecnologia via the Ph.D. Research Scholarship 2020.07245.BD.
This work was supported by the Digital Horizon Europe project FoQaCiA, GA no.101070558, funded by the European Union, NSERC (Canada), and UKRI (U.K.).

\appendix
\section{Clifford corrections for the magic-state injection gadget}\label{app: Clifford corrections for the gadget}

We start by considering the qutrit case ($p=3$) wherein the Clifford correction, $C_{\sigma} = U_{v} \left( X^{\dagger} \right)^{\sigma} U_{v}^{\dagger}$, associated with each possible outcome $\sigma \in \{ 0,\,1,\,2 \}$, is:
\begin{equation}
    C_{\sigma} = \sum_{k=0}^2 \zeta^{v_k-v_{k+\sigma}} \left| k \right> \left< k + \sigma \right|.
    \label{eq: Clifford correction p=3}
\end{equation}
The explicit form of this correction is not as important to us as the transformation it induces, by conjugation, on the generators of the Pauli group. This can be computed explicitly in a fairly straightforward way. We start with the transformation of $Z$, which can be obtained using Eqs.~\eqref{eq: generalized X and Z} and~\eqref{eq: Clifford correction p=3}:
\begin{equation}
    Z \xrightarrow{C_{\sigma}} C_{\sigma}ZC_{\sigma}^{\dagger} =  \omega^{\sigma}Z.
    \label{eq: Transformation of Z under gadget corrections (p=3)}
\end{equation}
Next, we compute the transformation of $X$. Again, making direct use of Eqs.~\eqref{eq: generalized X and Z} and~\eqref{eq: Clifford correction p=3}, we obtain:
\begin{equation*}
    X \xrightarrow{C_{\sigma}} C_{\sigma}XC_{\sigma}^{\dagger} =  \sum_{k=0}^{2} \zeta ^{v_{k+1}-v_k + v_{k+\sigma} -v_{k+\sigma+1}} \left| k + 1 \right> \left< k \right|\,.
\end{equation*}
Following calculations analogous to the ones illustrated in Ref.~\cite{HowardVala2012}, we can obtain an explicit recurrence relation for $v_{k+1} - v_{k}$: $v_{k+1} - v_{k} = 3 \epsilon^{\prime} + 2\gamma^{\prime} + 6z^{\prime} + 3k(z^{\prime} + 2k\gamma^{\prime}),$ for $p=3$. Using this, it is straightforward to show that:
\begin{equation*}
    v_{k+1}-v_k + v_{k+\sigma} -v_{k+\sigma+1} = 3\sigma \left( \gamma^{\prime} \sigma + 2 z^{\prime}\right) + 6\gamma^{\prime}\sigma k\,.
\end{equation*}
Then, it is simple to show that:
\begin{equation}
    X \xrightarrow{C_{\sigma}} C_{\sigma}XC_{\sigma}^{\dagger} =  \omega^{\sigma \left( \gamma^{\prime} \sigma + 2 z^{\prime}\right)}XZ^{2\gamma^{\prime}\sigma}\,.
    \label{eq: Transformation of X under gadget corrections (p=3)}
\end{equation}

Next, we take $p>3$; in this case, the Clifford correction can be written as:
\begin{equation}
    C_{\sigma} = \sum_{k=0}^{p-1} \omega^{v_k-v_{k+\sigma}} \left| k \right> \left< k + \sigma \right|.
\end{equation}
Following a similar approach as above, it can be shown that:
\begin{equation}
    X \xrightarrow{C_{\sigma}} \omega^{- \sigma \left(  2^{-1} \gamma^{\prime} \sigma + z^{\prime} \right)}XZ^{-\gamma^{\prime}\sigma} \text{ and } Z \xrightarrow{C_{\sigma}} \omega^{\sigma}Z \,,
    \label{eq: Transformation of Paulis under gadget corrections (p>3)}
\end{equation}
where $2^{-1}$ is understood to be the inverse of $2$ modulo $p$.

Note that knowing the transformations given by Eqs.~\eqref{eq: Transformation of Z under gadget corrections (p=3)},~\eqref{eq: Transformation of X under gadget corrections (p=3)}, and~\eqref{eq: Transformation of Paulis under gadget corrections (p>3)} is sufficient to efficiently (backward) propagate any Pauli operator through the gadget corrections $C_{\sigma}$, as required for the procedure described in Sec.~\ref{subsec: Proof of universality}.

\section{Proving the validity of $V_{(M,\sigma)}^{(A,a)}$ in Eq.~\eqref{eq: Clifford V(a,sigma)}}\label{app: Proving V(a,s)}

Here, we consider the situation where the $i$th Pauli operator $M$ does not commute with an operator $A$ from the list of dummy $Z$ operators and previously measured, independent and pairwise commuting operators. The outcome associated with the measurement of $A$ is denoted $a$. The commutation relationship between $A$ and $M$ is $MA=\omega^{\varphi}AM,$ where $\varphi \neq 0\,.$

The state of the system before the measurement of $M$ is $\left| \psi_i \right>$. Because the PBC procedure guarantees that all measured operators are compatible, $\left| \psi_i \right>$ is necessarily an eigenstate of $A$ with eigenvalue $a$. That is, $A^k \left| \psi_i \right> = \omega^{ka} \left| \psi_i \right>\,.$

If we were to perform the measurement of $M$, with an outcome $\sigma$, the system would evolve into the state $\left| \psi_{i+1}\right> \propto \hat{P}_{(M,\sigma)} \left| \psi_i \right>,$ where $\hat{P}_{(M,\sigma)}$ denotes the projector associated with the Pauli operator $M$ and corresponding outcome $\sigma$, so that:
\begin{equation*}
    \hat{P}_{(M,\sigma)} = \frac{1}{p} \sum_{k=0}^{p-1} \omega^{-k\sigma} M^k \,.
\end{equation*}
We can append to this operator any power of $\omega^{-a} A$, without changing its action on the state $\left| \psi_i \right>.$ Thus, we can write Eq.~\eqref{eq: Clifford V(a,sigma)}, and it is clear that the action of $V_{(M,\sigma)}^{(A,a)}$ onto $\left| \psi_i \right>$ yields the proper state $\left| \psi_{i+1} \right>$. To lighten the notation, from now on, we will refer to this operator simply as $V$, leaving the dependence on the operators $M$ and $A$, and their corresponding outcomes $\sigma$ and $a$, implicit. 

We now need only prove that (i) $V$ is unitary and (ii) $V$ is a Clifford.

To prove unitarity, we construct $V^{\dagger}$ and show that $V^{\dagger} V = V V^{\dagger} = I\,.$ Straightforward calculation leads to:
\begin{equation*}
    V^{\dagger} V = \frac{1}{p} \sum_{j,k=0}^{p-1} \omega^{(k-j)\left[ a - \sigma - \varphi(k+1) \right]} A^{j - k} M ^{k-j}.
\end{equation*}
Carrying out the relabeling $l \coloneqq k - j$, the equation above can be rewritten as \footnote{Note that, in principle, the start and end points of the summation are shifted by this relabeling. However, in practice, because arithmetic is carried out modulo $p$, the summation is performed exactly over the same range of values, and we can leave the start and end points unchanged.}:
\begin{equation*}
\begin{split}
    V^{\dagger} V & = \frac{1}{p} \sum_{l=0}^{p-1} \omega^{l(a-\sigma - \varphi)} A^{-l} M^{l} \sum_{k=0}^{p-1} \left[ \omega^{-l\varphi}\right]^k \\
    & = \frac{1}{p} \sum_{l=0}^{p-1} \omega^{l(a-\sigma - \varphi)} A^{-l} M^{l} \left(p\, \delta_{l,0} \right) = I\,.
\end{split}
\end{equation*}
An analogous calculation can be performed for $V V^{\dagger}.$
\begin{figure*}
    \centering
\begin{tikzpicture} \node[scale=0.8]{     \begin{quantikz}[thin lines]         \lstick{\ket{0}}\gategroup[wires=2,steps=3,style={dashed,rounded corners,fill=blue!20, inner xsep=10pt, xshift=-6pt, inner ysep=1pt},background]{$\left| \mathrm{GHZ}_2 \right>$}  &\gate{F}  &\ctrl{1}             &\qw                  &\qw            &\qw                      &\qw                    &\qw  &\qw                 &\qw                  &\qw\rstick[wires=8]{\ket{\mathrm{GHZ}_8}}\\         \lstick{\ket{0}}                                                                                                                                                          &\qw       &\gate{\mathrm{SUM}}  &\ctrl{1}             &\qw            &\qw                      &\ctrl{2}               &\qw  &\qw                 &\ctrl{1}             &\qw\\         \lstick{\ket{0}}\gategroup[wires=2,steps=3,style={dashed,rounded corners,fill=blue!20, inner xsep=10pt, xshift=-6pt, inner ysep=1pt},background]{}                        &\gate{F}  &\ctrl{1}             &\gate{\mathrm{SUM}}  &\meter{$m_1$}  &\cwbend{5}               &                       &     &\lstick{$\ket{0}$}  &\gate{\mathrm{SUM}}  &\qw\\         \lstick{\ket{0}}                                                                                                                                                          &\qw       &\gate{\mathrm{SUM}}  &\ctrl{1}             &\qw            &\gate{X^{-m_1}}          &\gate{\mathrm{SUM}^2}  &\qw  &\qw                 &\ctrl{1}             &\qw\\         \lstick{\ket{0}}\gategroup[wires=2,steps=3,style={dashed,rounded corners,fill=blue!20, inner xsep=10pt, xshift=-6pt, inner ysep=1pt},background]{}                        &\gate{F}  &\ctrl{1}             &\gate{\mathrm{SUM}}  &\meter{$m_2$}  &\cwbend{3}               &                       &     &\lstick{$\ket{0}$}  &\gate{\mathrm{SUM}}  &\qw\\         \lstick{\ket{0}}                                                                                                                                                          &\qw       &\gate{\mathrm{SUM}}  &\ctrl{1}             &\qw            &\gate{X^{-m_2+m_1}}      &\ctrl{2}               &\qw  &\qw                 &\ctrl{1}             &\qw\\         \lstick{\ket{0}}\gategroup[wires=2,steps=3,style={dashed,rounded corners,fill=blue!20, inner xsep=10pt, xshift=-6pt, inner ysep=1pt},background]{}                        &\gate{F}  &\ctrl{1}             &\gate{\mathrm{SUM}}  &\meter{$m_3$}  &\cwbend{1}               &                       &     &\lstick{$\ket{0}$}  &\gate{\mathrm{SUM}}  &\qw\\         \lstick{\ket{0}}                                                                                                                                                          &\qw       &\gate{\mathrm{SUM}}  &\qw                  &\qw            &\gate{X^{-m_3+m_2-m_1}}  &\gate{\mathrm{SUM}^2}  &\qw  &\qw                 &\qw                  &\qw     \end{quantikz} }; \end{tikzpicture}
    \caption{Constant-depth quantum circuit for creating an eight-qudit GHZ state, assisted by measurements, adaptivity, and qudit reset.}
    \label{fig: Generation of t-qudit GHZ states}
\end{figure*}
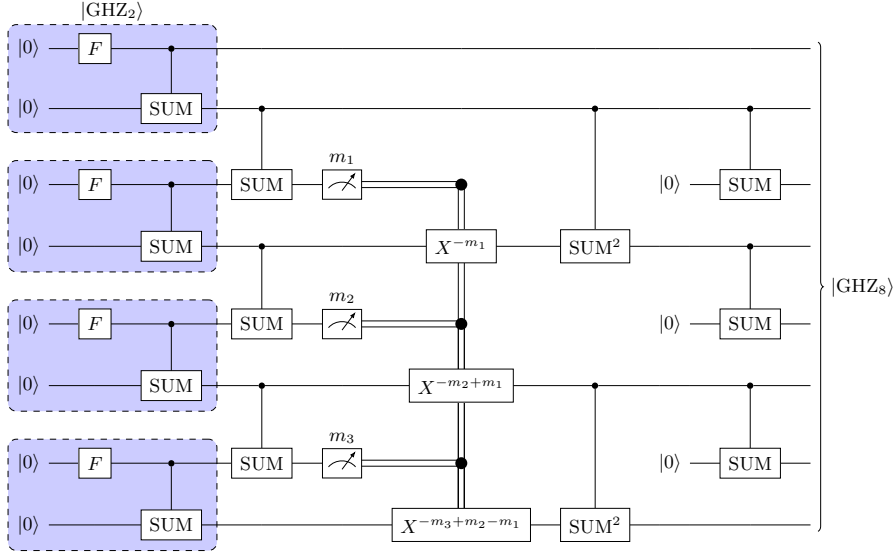

Finally, we show that $V$ is not only a unitary, but it is also a Clifford. To that end, we need to prove that it maps Pauli operators into Pauli operators under conjugation. Consider a (perfectly generic) Pauli operator $R$ so that: $MR = \omega^{\alpha}RM$ and $AR = \omega^{\beta}RA.$ Then,
\begin{equation*}
\begin{split}
    VRV^{\dagger} & = \frac{1}{p} \sum_{j,k=0}^{p-1} \omega^{(k-j)(a-\sigma)} M^{k}A^{-k-1}RA^{j+1}M^{-j} \\
    & = \frac{1}{p} R \sum_{j,k=0}^{p-1} \omega^{(k-j)(a-\sigma)- \beta} \omega^{k(\alpha - \beta + \varphi(j-k))} A^{j-k} M^{k-j}\,.
\end{split}
\end{equation*}
As before, we can perform a relabeling of the indices so that $l \coloneqq k - j$ which leads to the following equation:
\begin{equation}
\begin{split}
    VRV^{\dagger} & = \frac{1}{p} R \sum_{l=0}^{p-1} \omega^{l(a-\sigma)- \beta} A^{-l} M^{l} \sum_{k=0}^{p-1} \left[ \omega^{\alpha - \beta - \varphi l} \right]^k \\
    & = \frac{1}{p} R \sum_{l=0}^{p-1} \omega^{l(a-\sigma)- \beta} A^{-l} M^{l} \left( p \delta_{l,\,\varphi^{-1}(\alpha - \beta)} \right) \\
    & = \omega^{\varphi^{-1}(\alpha - \beta)(a - \sigma) - \beta} R A^{-\varphi^{-1}(\alpha - \beta) }M^{\varphi^{-1}(\alpha - \beta)}\,,
    \label{eq: Proof that V is a Clifford}
\end{split}
\end{equation}
where $\varphi^{-1}$ is understood to be the inverse of $\varphi$ modulo $p$. Because the result in this equation is the product of Pauli operators, it is necessarily a Pauli operator. Thus, $V$ maps any Pauli operator into a Pauli operator.

Knowing the transformation given by Eq.~\eqref{eq: Proof that V is a Clifford} is all that is needed in order to efficiently backward-propagate any Pauli operator through any possible $V$ operator that may arise from the PBC procedure. We note that this holds only when $p$ is an odd-prime integer in which case $\mathbb{F}_p$ is a field and every element $\varphi$ will have a multiplicative inverse $\varphi^{-1}.$ If $p$ is not an odd-prime number then the set $\{ 0, 1, \dots, p-1 \}$ is a ring (rather than a field), and not every element has a corresponding multiplicative inverse.

\section{Preparation of $n$-qudit GHZ states in constant depth}\label{app: Preparation of an n-qudit GHZ}

Here, we show that an $n$-qudit GHZ state $\left| \mathrm{GHZ}_n \right> = \frac{1}{\sqrt{p}} \sum_{k=0}^{p-1} \left|k\right>^{\otimes n}$ can be prepared via a constant depth quantum circuit, as depicted in Fig.~\ref{fig: Generation of t-qudit GHZ states}. This is a generalization for qudits of the construction in Fig. 1 of~\cite{Quek+2022}.

The procedure starts with the preparation of $n/2$ GHZ states on two qudits, as indicated by the blue boxes in Fig.~\ref{fig: Generation of t-qudit GHZ states}. It is straightforward to show that
\begin{equation*}
    \left| \mathrm{GHZ}_2 \right> = \mathrm{SUM}_{1,2}\left( F \otimes I \right) \left| 0,0 \right> = \frac{1}{\sqrt{p}} \sum_{x\in \mathbb{F}_p} \left| x,\,x \right>\,.
\end{equation*}

Having prepared $n/2$ copies of this GHZ state one applies a sequence of $(n/2 - 1)$ \textsc{sum} gates between the second qudit of each pair, and the first qudit of the following one so that the state of the system is transformed into:
\begin{equation*}
\begin{split}
    \left| \psi_1 \right> & = \prod_{j=1}^{n/2 - 1} \mathrm{SUM}_{2j, 2j+1} \left| \mathrm{GHZ}_2 \right>^{\otimes n/2} \\
    & = \frac{1}{\sqrt{p}^{n/2}} \prod_{j=1}^{n/2 - 1} \mathrm{SUM}_{2j, 2j+1} \bigotimes_{k=1}^{n/2} \sum_{x_k \in \mathbb{F}_p} \left| x_k,\,x_k \right> \\
    & = \frac{1}{\sqrt{p}^{n/2}} \sum_{x_1 \in \mathbb{F}_p} \left| x_1,\,x_1 \right> \bigotimes_{k=2}^{n/2} \sum_{x_k \in \mathbb{F}_p} \left| x_k + x_{k-1},\,x_k \right>
\end{split}
\end{equation*}

The next step involves measuring every odd qudit (except the first). Denoting by $m_1,m_2,\ldots,m_{n/2-1}$ each of the measurement outcomes, it is clear from the equation above that $m_{k-1} = x_{k} + x_{k-1}\,.$ This allows one to write a recurrence relationship for $x_k$ so that: $x_k = m_{k-1} - x_{k-1}\,,$ which leads to the following expression for $x_k:$
\begin{equation}
    x_k = (-1)^k\left(\sum_{j=1}^{k-1}(-1)^{j+1}m_j - x_1 \right)\,.
    \label{eq: x_k from x_1}
\end{equation}
Thus, after the measurements, the state of the $n$-qudit system is
\begin{equation*}
    \left| \psi_2 \right>= \frac{1}{\sqrt{p}} \sum_{x_1 \in \mathbb{F}_p} \left| x_1,\,x_1 \right> \bigotimes_{k=2}^{n/2} \left| m_{k-1},\,x_k \right>,
\end{equation*}
where $x_k$ is given by Eq.~\eqref{eq: x_k from x_1} and hides a dependence in $x_1$. It is clear that the application of Pauli corrections of the form $X^{(-1)^k\sum_{j=1}^{k-1} (-1)^j m_j}$ on the unmeasured qudit of each of the measured Bell pairs leaves the system in the state:
\begin{equation*}
    \left| \psi_3 \right> = \frac{1}{\sqrt{p}} \sum_{x_1 \in \mathbb{F}_p} \left| x_1,\,x_1 \right> \bigotimes_{k=2}^{n/2} \left| m_{k-1},\,(-1)^{k+1} x_1 \right>.
\end{equation*}
For $k$ odd, the second qudit of each qudit pair is in the desired state $\left|x_1 \right>$, while for $k$ even, those qudits are in the state $\left| -x_1 \right>$. For qubits, $\left|x_1 \right> = \left|- x_1 \right>,$ but the same is not true for qudits. Therefore, this sign needs to be corrected. The way to do it consists of applying $\textsc{sum}^2$ gates controlled on the second qudit of each odd-numbered qudit pair and targeting the second qudit of the following (even-numbered) pair. Doing so leads to the state:
\begin{equation*}
    \left| \psi_4 \right> = \frac{1}{\sqrt{p}} \sum_{x_1 \in \mathbb{F}_p} \left| x_1,\,x_1 \right> \bigotimes_{k=2}^{n/2} \left| m_{k-1},\, x_1 \right>.
\end{equation*}
Finally, we can reset the measured qudits to $0$ and apply a \textsc{sum} gate controlled on the second qudit of each pair and targeting the first qudit of the ensuing one. This leaves the system in the state:
\begin{equation}
    \left| \mathrm{GHZ}_n \right> = \frac{1}{\sqrt{p}} \sum_{x_1 \in \mathbb{F}_p} \left| x_1 \right>^{\otimes n}\,,
\end{equation}
as promised.

\section{Qudit-exclusive improvements on the practical PBC implementation}\label{app: Improvement of the practical mmts}

Herein, we prove the following Lemma.
\begin{lemma}
    Suppose that the PBC procedure yields a Pauli operator $M=\omega ^{\lambda} X\left( \boldsymbol{x} \right) Z\left( \boldsymbol{z} \right)$ to be measured in quantum hardware. Instead of measuring this operator, we can measure $M^{\prime} = \omega^{k \lambda} X(k\boldsymbol{x}) Z(k\boldsymbol{z})\,,$ with $k\in \mathbb{F}_p \backslash \{0\}$ and $k\boldsymbol{v} = \left( kv_1,\,\ldots,\,kv_t\right) \in \mathbb{F}_p^{t},$ so long as we re-interpret the corresponding measurement outcome, $\sigma^{\prime}$, to get the correct outcome, $\sigma$, of the desired measurement $M$:
    \begin{equation*}
        \sigma \coloneqq \sigma^{\prime} k^{-1} + \boldsymbol{x}\cdot \boldsymbol{z} (k-1)/2\,.
    \end{equation*}
\end{lemma}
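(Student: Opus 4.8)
The plan is to show that $M$ and $M^{\prime}$ share a common eigenbasis with a simple relabeling of eigenvalues, so that measuring one is equivalent to measuring the other up to classical post-processing of the outcome. First I would verify that $M^{\prime}$ is, as written, a genuine Pauli operator with trivial global phase issues; the factor $\omega^{k\lambda}$ is chosen precisely so that $M^{\prime}$ equals the $k$-th power of $M$ in the appropriate sense. Concretely, the key algebraic identity to establish is
\begin{equation*}
    M^{k} = \omega^{\lambda k} \bigl( X(\boldsymbol{x}) Z(\boldsymbol{z}) \bigr)^{k} = \omega^{\lambda k} \omega^{\binom{k}{2} \boldsymbol{x}\cdot\boldsymbol{z}} X(k\boldsymbol{x}) Z(k\boldsymbol{z}) = \omega^{\binom{k}{2}\boldsymbol{x}\cdot\boldsymbol{z}} M^{\prime}\,,
\end{equation*}
which follows from the Weyl commutation relation $Z(\boldsymbol{z}) X(\boldsymbol{x}) = \omega^{\boldsymbol{x}\cdot\boldsymbol{z}} X(\boldsymbol{x}) Z(\boldsymbol{z})$ applied repeatedly, together with $\binom{k}{2} = k(k-1)/2$ being well-defined in $\mathbb{F}_p$ since $p$ is odd. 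Thus $M^{\prime} = \omega^{-k(k-1)\boldsymbol{x}\cdot\boldsymbol{z}/2} M^{k}$.

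Next I would translate this operator identity into the statement about outcomes. If $\left| \psi_\sigma \right>$ is the eigenstate of $M$ with $M \left| \psi_\sigma \right> = \omega^{\sigma} \left| \psi_\sigma \right>$, then applying the displayed identity gives $M^{\prime} \left| \psi_\sigma \right> = \omega^{-k(k-1)\boldsymbol{x}\cdot\boldsymbol{z}/2}\, \omega^{k\sigma} \left| \psi_\sigma \right>$, so the same state is an eigenstate of $M^{\prime}$ with eigenvalue exponent $\sigma^{\prime} = \bigl( k\sigma - k(k-1)\boldsymbol{x}\cdot\boldsymbol{z}/2 \bigr) \bmod p$. Since $k \neq 0$ in the field $\mathbb{F}_p$, this map $\sigma \mapsto \sigma^{\prime}$ is a bijection on $\{0,\ldots,p-1\}$, so the eigenspace decompositions of $M$ and $M^{\prime}$ coincide and the two measurements are physically identical up to the deterministic relabeling. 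Inverting the relation yields $\sigma = k^{-1}\sigma^{\prime} + (k-1)\boldsymbol{x}\cdot\boldsymbol{z}/2 \pmod p$ (using $k^{-1}\cdot k(k-1)/2 = (k-1)/2$), which is exactly the claimed formula. I would also note explicitly that the post-measurement state — the $\sigma$-eigenstate of $M$ — is correctly prepared, since it is literally the same vector as the $\sigma^{\prime}$-eigenstate of $M^{\prime}$.

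The main obstacle, such as it is, lies in getting the phase bookkeeping in the repeated application of the Weyl relation exactly right: one must be careful that $\bigl(X(\boldsymbol{x})Z(\boldsymbol{z})\bigr)^k$ accumulates the phase $\omega^{\binom{k}{2}\boldsymbol{x}\cdot\boldsymbol{z}}$ and not, say, $\omega^{k^2 \boldsymbol{x}\cdot\boldsymbol{z}}$ or a per-component version with the wrong index, and that $k(k-1)/2$ is interpreted consistently in $\mathbb{F}_p$ throughout (this is where oddness of $p$ is used, so that $2^{-1}$ exists). A clean way to organize this is to prove the single-qudit version $\bigl(X^{x}Z^{z}\bigr)^k = \omega^{k(k-1)xz/2} X^{kx} Z^{kz}$ first by induction on $k$, then tensor over the $t$ qudits so the exponents add as $\sum_j x_j z_j = \boldsymbol{x}\cdot\boldsymbol{z}$. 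Everything else is routine linear algebra once the operator identity is in hand.
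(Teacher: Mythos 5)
Your proof is correct, and the phase bookkeeping checks out: with the convention $Z(\boldsymbol{z})X(\boldsymbol{x})=\omega^{\boldsymbol{x}\cdot\boldsymbol{z}}X(\boldsymbol{x})Z(\boldsymbol{z})$ one indeed gets $\bigl(X(\boldsymbol{x})Z(\boldsymbol{z})\bigr)^{k}=\omega^{k(k-1)\boldsymbol{x}\cdot\boldsymbol{z}/2}X(k\boldsymbol{x})Z(k\boldsymbol{z})$, hence $M^{\prime}=\omega^{-k(k-1)\boldsymbol{x}\cdot\boldsymbol{z}/2}M^{k}$, and your relabeling $\sigma^{\prime}=k\bigl[\sigma-\boldsymbol{x}\cdot\boldsymbol{z}(k-1)/2\bigr]$ matches the paper's and inverts to the claimed formula. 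The route differs from the paper's in organization rather than substance: the paper never writes the identity $M^{\prime}\propto M^{k}$ explicitly, but instead works directly with the spectral projector $\hat{P}_{(M,\sigma)}=p^{-1}\sum_{j}\omega^{-j\sigma}M^{j}$, substitutes $j=j^{\prime}k$ (legitimate because $k$ is invertible mod $p$, so the sum ranges over the same values), and reads off $\hat{P}_{(M,\sigma)}=\hat{P}_{(M^{\prime},\sigma^{\prime})}$ — an equality of projectors that immediately gives equal outcome probabilities and post-measurement states with no further spectral argument. Your version instead proves the operator-power identity first and then argues via eigenspaces, using the bijectivity of $\sigma\mapsto\sigma^{\prime}$ (from $k\neq 0$) to conclude the eigenspace decompositions coincide; this makes the fact that $M^{\prime}$ generates the same cyclic group as $M$, and hence that the post-measurement state is literally the same vector, more transparent, at the cost of needing the (standard, but worth stating) fact that $M^{p}=I$ so the eigenvalues are exactly $\{\omega^{\sigma}\}_{\sigma\in\mathbb{F}_p}$ — which for odd $p$ again uses $p\mid p(p-1)/2$. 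Both proofs hinge on the same quadratic phase and on $2^{-1},k^{-1}$ existing in $\mathbb{F}_p$, which is why the lemma is restricted to odd primes.
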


\begin{proof}
    The projector associated with the measurement of $M$ with corresponding outcome $\sigma$ is given by:
    \begin{equation*}
    \begin{split}
        \hat{P}_{(M,\sigma)} & = \frac{1}{p} \sum_{j=0}^{p-1} \omega^{-j\sigma} \left[ \omega^{\lambda} X\left( \boldsymbol{x} \right) Z\left( \boldsymbol{z} \right) \right]^j \\
        & = \frac{1}{p} \sum_{j=0}^{p-1} \omega^{-j \left( \sigma - \lambda \right)} \left[X(\boldsymbol{x})Z(\boldsymbol{z}) \right]^j \\
        & = \frac{1}{p} \sum_{j=0}^{p-1} \omega^{-j \left[ \sigma - \lambda - \boldsymbol{x}\cdot \boldsymbol{z} (j-1)/2 \right]} X(j \boldsymbol{x})Z(j \boldsymbol{z})  \,.
    \end{split}
    \end{equation*}

    Because things are computed in modular arithmetic, we can redefine the summation index so that $j \coloneqq j^{\prime}k$. As long as $k\in \mathbb{F}_p\backslash \{0\},$ the range of the sum remains unchanged and we can write:
    \begin{equation*}
    \begin{split}
        \hat{P}_{(M,\sigma)} & = \frac{1}{p} \sum_{j^{\prime}=0}^{p-1} \omega^{-j^{\prime} k \left[ \sigma - \lambda - \boldsymbol{x}\cdot \boldsymbol{z} (j^{\prime} k-1)/2 \right]} X(j^{\prime} k \boldsymbol{x})Z(j^{\prime} k \boldsymbol{z}) \\
        & = \frac{1}{p} \sum_{j^{\prime}=0}^{p-1} \omega^{-j^{\prime} k \left[ \sigma - \boldsymbol{x}\cdot \boldsymbol{z} (k-1)/2 \right]} \left[ \omega^{k \lambda}X(k \boldsymbol{x})Z(k \boldsymbol{z}) \right]^{j^{\prime}} \\
        & = \frac{1}{p} \sum_{j^{\prime}=0}^{p-1} \omega^{-j^{\prime} \sigma^{\prime}} \left[ \omega^{k \lambda} X\left( k\boldsymbol{x} \right) Z\left( k\boldsymbol{z} \right) \right]^{j^{\prime}} \equiv \hat{P}_{(M^{\prime},\sigma^{\prime})}
    \end{split}
    \end{equation*}
    where $\sigma^{\prime} = k \left[ \sigma - \boldsymbol{x}\cdot \boldsymbol{z} (k-1)/2 \right]$ and $M^{\prime} = \omega^{k \lambda} X\left( k\boldsymbol{x} \right) Z\left( k\boldsymbol{z} \right)\,.$ This concludes the proof.
\end{proof}

As explained in Sec.~\ref{subsubsec: Further improvements}, this observation can be used to reduce the number of \textsc{sum} gates that are needed to implement the sequence of Pauli measurements. Note that, once again,  there is an assumption in the proof that we are working with elements of a field, $\mathbb{F}_p$, so that the Lemma holds only for odd-prime numbers.

\section{Generalized stabilizer $\alpha$-Rényi entropies}\label{app: Gen Stab Renyi Entropies}

\subsection{Properties}

In the context of quantum computation with $p$-level systems, the stabilizer $\alpha$-Rényi entropies defined in Ref.~\cite{LeoneOH2022} can be generalized as written in Eq.~\eqref{eq: stabilizer Renyi entropy}. By direct comparison between this definition and the original one, it is rather intuitive that all properties of the original measure still hold. Nevertheless, here we present explicit proofs identical to the ones in the Supplementary Material of Ref.~\cite{LeoneOH2022}. 

\paragraph*{Faithfulness:} $M_{(\alpha,p)} \left( \left| \psi \right> \right) = 0$ iff $\left| \psi \right>$ is a stabilizer state. Otherwise, $M_{(\alpha,p)} \left( \left| \psi \right> \right) > 0\,.$

If $\left| \psi \right>$ is a pure stabilizer state, by definition, there are $p^n$ Pauli operators so that $P \left| \psi \right> = \left| \psi \right>$, for $P\in \mathcal{P}_n$. Therefore, there are $p^n$ Pauli operators $P \in \tilde{\mathcal{P}}_n$ so that $\Xi_P \left( \left| \psi \right>\right) = p^{-n},$ while all other operators in $\tilde{\mathcal{P}}_n$ yield $\Xi_P \left( \left| \psi \right>\right) = 0$ (from the normalization of $\{ \Xi_P \left( \left| \psi \right>\right) \}$).

If $\alpha = 0$, $M_{(0,p)} \left( \left| \psi \right> \right) = \log_p \left( \sum_{P\in \mathcal{S}} 1 \right) - n = n-n = 0,$ where $\mathcal{S}$ is the set of $p^n$ operators with non-zero contribution.

For any other $\alpha \neq 1$, we note that $M_{(\alpha,p)} \left( \left| \psi \right> \right) = \frac{1}{1-\alpha} \log_p \sum_{P\in \mathcal{S}} \Xi_P^{\alpha} \left( \left| \psi \right>\right) - n = \frac{1}{1-\alpha} \log_p p^{(1-\alpha)n} -n = 0.$

The result for $\alpha =1$ follows from continuity.

Now, we need to prove the converse direction. Thus, suppose that $M_{(\alpha,p)} \left( \left| \psi \right> \right) = 0\,.$ This implies that $\sum_{P\in \tilde{\mathcal{P}}_n} \left| \left< \psi \left| P \right| \psi \right> \right|^{2\alpha} = p^n\,.$
At the same time, the fact that $\left| \psi \right>$ is a pure state implies that $\sum_{P\in \tilde{\mathcal{P}}_n} \left| \left< \psi \left| P \right| \psi \right> \right|^{2} = p^n\,.$

Take $\alpha = 0.$ This turns the first condition into $\sum_{P\in \tilde{\mathcal{P}}_n} 1 = p^n\,,$ which basically tells us that there are exactly $p^n$ Pauli operators which yield a non-zero expectation value. The purity condition further informs us that these contributing elements necessarily have $\left| \left< \psi \left| P \right| \psi \right> \right|^{2} = 1\,.$ Therefore, $\left| \psi \right>$ is necessarily a stabilizer state. 

For any other $\alpha \neq 1,$ the compatibility between the two conditions leads to $\sum_{P\in \tilde{\mathcal{P}}_n} a_{P}^{\alpha} = \sum_{P\in \tilde{\mathcal{P}}_n} a_{P},$ where $a_P = \left| \left< \psi \left| P \right| \psi \right> \right|^{2} \in \mathbb{R}_0^+\,.$ Necessarily, this implies that either $a_P = 0$ or $a_P = 1\,.$ For the two conditions to hold, there are $p^n$ Pauli operators such that $a_P =1$ which, just as for $\alpha = 0$, means that $\left| \psi \right>$ is a stabilizer state.

Once again, the result for $\alpha = 1$ follows by continuity.

To prove that $M_{(\alpha, p) } \left( \left| \psi \right> \right) > 0$ for any non-stabilizer state, we note that the hierarchy of $\alpha$-Rényi entropies is such that $M_{(\alpha, p)}\left( \left| \psi \right> \right) \geq M_{(\alpha + a, p)}\left( \left| \psi \right> \right),$ $\forall a > 0\,.$ Further, we note that in the limit where $\alpha \rightarrow \infty$ we have: $\lim_{\alpha \rightarrow \infty} M_{(\alpha, p)}\left( \left| \psi \right> \right) = 0.$ This means that for any finite $\alpha$ we have $M_{(\alpha, p)} \left( \left| \psi \right> \right) \geq 0.$ Since the equality holds only for stabilizer states, as proved above, necessarily the $\alpha$-Rényi entropies are positive for all other states. 

\paragraph*{Stability under free operations:} Let $C$ be a free operation; then, $M_{(\alpha,p)} \left( C \left| \psi \right> \right) = M_{(\alpha,p)} \left( \left| \psi \right> \right).$

In this context, $C$ being a free operation means that it belongs to the Clifford group on $n$ qudits. Any Pauli operator $P \in \tilde{\mathcal{P}}_n$ is mapped under conjugation by $C$ into another Pauli operator: $C^{\dagger} P C = \omega^{\varphi} Q,$ with $Q \in \tilde{\mathcal{P}}_n$ and $\varphi\in \mathbb{F}_p$. With this observation, the proof of this property is trivial:
\begin{equation*}
\begin{split}
    \Xi_P \left( C \left| \psi \right> \right) & = \frac{1}{p^n} \left| \left< \psi \left| C^{\dagger} P C \right| \psi \right> \right|^{2} \\
    & = \frac{1}{p^n} \left| \left< \psi \left| Q \right| \psi \right> \right|^{2} = \Xi_Q \left( \left| \psi \right> \right),
\end{split}
\end{equation*}
which leads to
\begin{equation*}
\begin{split}
    M_{(\alpha, p)} \left( C \left| \psi \right> \right) & = \frac{1}{1-\alpha} \log_p \left( \sum_{P\in \tilde{\mathcal{P}}_n} \Xi_{P}^{\alpha} \left( C \left| \psi \right> \right) \right) - n \\
    & = \frac{1}{1-\alpha} \log_p \left( \sum_{Q\in \tilde{\mathcal{P}}_n}  \Xi_Q^{\alpha} \left( \left| \psi \right> \right) \right) - n \\
    & = M_{(\alpha, p)} \left( \left| \psi \right> \right)\,.
\end{split}
\end{equation*}

\paragraph*{Additivity:} $M_{(\alpha,p)} \left( \left| \psi_1 \right> \otimes \left| \psi_2 \right> \right) = M_{(\alpha,p)} \left( \left| \psi_1 \right> \right) + M_{(\alpha,p)} \left( \left| \psi_2 \right> \right).$

The proof of this final property is rather straightforward. Suppose that $\left| \psi_1 \right>$ (resp. $\left| \psi_2 \right>$) is a quantum state of $n_1$ (resp. $n_2$) qudits, so that $\left| \Psi \right> = \left| \psi_1 \right> \otimes \left| \psi_2 \right>$ is a quantum state of $n=n_1 + n_2$ qudits. Then, we can write:
\begin{equation*}
    M_{(\alpha,p)} \left( \left| \Psi \right> \right) = \frac{1}{1-\alpha} \log_p \left( \sum_{P\in \tilde{\mathcal{P}}_n} \Xi_{P}^{\alpha} \left( \left|  \Psi \right> \right) \right) - n\,.
\end{equation*}
Any Pauli operator $P\in \tilde{\mathcal{P}}_n$ can be written as $P=P_1 \otimes P_2$, with $P_1\in \tilde{\mathcal{P}}_{n_1}$ and $P_2\in \tilde{\mathcal{P}}_{n_2}\,.$ Thus, we note that $\Xi_{P} \left( \left| \Psi \right> \right) = \Xi_{P_1} \left( \left| \psi_1 \right> \right) \Xi_{P_2} \left( \left| \psi_2 \right> \right)\,,$ which allows us to re-write the equation above as
\begin{widetext}
\begin{equation*}
\begin{split}
    M_{(\alpha,p)} \left( \left|  \Psi \right> \right) & = \frac{1}{1-\alpha} \log_p \left( \sum_{P_1\in \tilde{\mathcal{P}}_{n_1}} \Xi_{P_1}^{\alpha} \left( \left| \psi_1 \right> \right)
    \sum_{P_2\in \tilde{\mathcal{P}}_{n_2}} \Xi_{P_2}^{\alpha} \left( \left| \psi_2 \right> \right)\right) - (n_1 + n_2) \\
    & = \frac{1}{1-\alpha} \log_p \left( \sum_{P_1\in \tilde{\mathcal{P}}_{n_1}} \Xi_{P_1}^{\alpha} \left( \left| \psi_1 \right> \right) \right) - n_1 + \frac{1}{1-\alpha} \log_p \left( \sum_{P_2\in \tilde{\mathcal{P}}_{n_2}} \Xi_{P_2}^{\alpha} \left( \left| \psi_2 \right> \right)\right) - n_2 \\
    & = M_{(\alpha,p)} \left( \left| \psi_1 \right> \right) + M_{(\alpha,p)} \left( \left| \psi_2 \right> \right)\,.
\end{split}
\end{equation*}
\end{widetext}
This property makes it easy to compute the stabilizer $\alpha$-Rényi entropies of $k$ copies of non-stabilizer states, as one notes that $M_{(\alpha,p)} \left( \left| \psi_1 \right>^{\otimes k} \right) = k M_{(\alpha,p)} \left( \left| \psi_1 \right> \right).$ This will be useful in the ensuing calculations allowing us to set a lower bound on the RoM of qutrit and ququint magic states.

\subsection{Hybrid computation lower bounds}

Here we detail the calculation of the lower bounds on the cost of hybrid computation with qutrits and ququints presented in Sec.~\ref{subsec: Hybrid computation}. From Eq.~\eqref{eq: Hybrid computation cost lower bound} together with the additivity of $\alpha$-Rényi entropies we note that:
\begin{equation}
    \mathcal{R}^{2} \left( \left| T_v \right>^{\otimes k} \right) \geq p^{M_{\left(\nicefrac{1}{2},p\right)}\left( \left| T_v \right>^{\otimes k} \right)} = p^{k M_{\left(\nicefrac{1}{2},p\right)}\left( \left| T_v \right> \right)}\,.
\end{equation}

The $\nicefrac{1}{2}$-Rényi entropy of a single-qudit magic state $\left| T_v \right>$ can be written as:
\begin{equation}
    M_{\left(\nicefrac{1}{2},p\right)}\left( \left| T_v \right> \right) = 2 \log_p \left( \frac{1}{p} \sum_{P\in \tilde{\mathcal{P}}_1} \left| \left< T_v \left| P \right| T_v \right> \right| \right)\,.
    \label{eq: 1/2 Renyi entropy}
\end{equation}

As stated in Sec.~\ref{subsec: Universality via MSI}, we consider the single-qutrit magic state $\left| T_v \right> = \left( \left| 0 \right> + e^{2\pi i /9} \left| 1 \right> + e^{-2\pi i / 9} \left| 2 \right>  \right)/\sqrt{3}\,.$ The sum in Eq.~\eqref{eq: 1/2 Renyi entropy} runs over the nine single-qutrit Pauli operators with phase equal to $1$ and yields the result $\sum_{P\in \tilde{\mathcal{P}}_1} \left| \left< T_v \left| P \right| T_v \right> \right| = (1 + 2\sqrt{3})\,.$ This immediately leads to a lower bound for the cost of hybrid computation with qutrits given by:
\begin{equation}
    \mathcal{R}^{2} \left( \left| T_v \right>^{\otimes k} \right) \geq 3^{k2\log_3 \left( \frac{1+2\sqrt{3}}{3} \right) } = \left( \frac{1+2\sqrt{3}}{3} \right)^{2k}\,,
\end{equation}
which corresponds to the lower bound given in the main text: $\Omega \left( 3^{0.7236 k} \epsilon^{-2} \right)$.

The single-ququint magic state is given by $\left| T_v \right> = \frac{1}{\sqrt{5}} \left( \left| 0 \right> +  e^{-4\pi i /5}  \left| 1 \right> + e^{-2\pi i / 5} \left| 2 \right> + e^{4\pi i /5} \left| 3 \right> + e^{2\pi i / 5} \left| 4 \right> \right)$.
In this case, the sum in Eq.~\eqref{eq: 1/2 Renyi entropy} involves 25 Pauli operators and has a value of $\left( 1+4\sqrt{5} \right).$ This immediately leads to the lower bound for the cost of hybrid computation with ququints given by:
\begin{equation}
    \mathcal{R}^{2} \left( \left| T_v \right>^{\otimes k} \right) \geq 5^{k2\log_5 \left( \frac{1+4\sqrt{5}}{5} \right) } = \left( \frac{1+4\sqrt{5}}{5} \right)^{2k}\,,
\end{equation}
which can be presented as $\Omega \left( 5^{0.8544 k} \epsilon^{-2} \right).$


\providecommand{\noopsort}[1]{}\providecommand{\singleletter}[1]{#1}%

\end{document}